\definecolor{deepblue}{rgb}{0, 0.2, 0.6}
\newtheorem{remark}{\bf Remark}
\newtheorem{lemma}{\bf Lemma}
\newtheorem{definition}{\bf Definition}
\newtheorem{problem}{\bf Problem}
\newtheorem{thm}{\bf Theorem}
\definecolor{customgreen}{RGB}{0,102,51}  
\renewcommand{\subsection}{%
  \@startsection{subsection}{2}{\z@}%
    {1.5ex plus 1.5ex minus .5ex}%
    {0.7ex plus .5ex minus 0ex}%
    {\normalfont\normalsize\itshape\color{customgreen}}%
}
\renewcommand{\subsubsection}{%
  \@startsection{subsubsection}{3}{\z@}%
    {0pt}%
    {0pt}%
    {\normalfont\normalsize\bfseries\itshape\color{deepblue}}%
}
\newcommand{\wxy}{\color{black}}
\newcommand{\hw}{\color{black}}
\newcommand{\hwz}{\color{black}}
\newcommand{\wxyr}{\color{black}}
\def\BibTeX{{\rm B\kern-.05em{\sc i\kern-.025em b}\kern-.08em
    T\kern-.1667em\lower.7ex\hbox{E}\kern-.125emX}}
\begin{document}
% \title{A class of unified disturbance rejection control barrier functions}
\title{{\wxyr Disturbance Rejection Control Barrier Functions}}
\author{Xinyang Wang, Wei Xiao, and Hongwei Zhang, \IEEEmembership{Senior Member,~IEEE}
\thanks{This work was supported by the National Natural Science Foundation of China under grants 62473114 and W2541002. (Corresponding author: Hongwei Zhang)}
\thanks{Xinyang Wang and Hongwei Zhang are with the School of Intelligence Science and Engineering, Harbin Institute of Technology, Shenzhen, Guangdong 518055, China (e-mail: wangxy@stu.hit.edu.cn; hwzhang@hit.edu.cn).}
\thanks{Wei Xiao is with Massachusetts
Institute of Technology, Cambridge, MA 02139, USA (e-mail: weixy@mit.edu).}
}

\maketitle

\begin{abstract}
%
% Robust control barrier functions (CBFs) typically require a prior knowledge of disturbance bounds, posing severe challenges in real-world applications. 
% %
% Moreover, most robust CBFs are limited to matched disturbances, where control inputs can directly compensate for disturbances.
Most existing robust control barrier functions (CBFs) can only handle matched disturbances, restricting their applications in real-world scenarios. 
While some recent advances extend robust CBFs to unmatched disturbances, they heavily rely on differentiability property of disturbances, and fail to accommodate non-differentiable case for {\wxyr safety constraints with high relative degree}.
% %
% Such requirements fail to capture real-world disturbances like wind gusts or mechanical vibrations that often exhibit bounded but non-differentiable characteristics.
%
To address these limitations, this paper proposes {\wxyr a class of disturbance rejection CBFs (DRCBFs), including knowledge-based DRCBFs (kDRCBFs) and reciprocal-compensated DRCBFs (rDRCBFs).}
These two DRCBFs can strictly guarantee safety under general bounded disturbances, which includes both matched or unmatched, differentiable or non-differentiable disturbances as special cases.
%
% This class of DRCBFs guarantees safety in the presence of bounded disturbances with arbitrary form, which is highly favorable for enhancing robustness of CBF strategies.
%
{\wxyr Moreover, no information of disturbance is needed in rDRCBFs.}
Simulation results illustrate that the proposed  DRCBFs outperform existing robust CBFs. 
\end{abstract}

\begin{IEEEkeywords}
control barrier function,  disturbance rejection, robust safe control.
\end{IEEEkeywords}

%%======================
\section{Introduction}
%%======================
%
Safety concern is a central issue for controller design in safety-critical scenarios, such as autonomous driving \cite{Koopman2017Autonomous} and human-robot collaboration \cite{Li2024Optimal}. 
Inspired by control Lyapunov function (CLF), the concept of control barrier function (CBF) was introduced in \cite{ames2017control} to transform state-dependent safety constraints into a control-affine formulation.
By incorporating CBF and CLF into a quadratic program, a CLF-CBF {\wxyr quadratic-program}-based controller was designed for safety guarantee \cite{ames2019control}.
The effectiveness of CBF strategies heavily relies on  accurate system models, which are, however, generally unknown in practice due to unknown disturbances/uncertainties.
This gap may render CBF strategies ineffective, potentially leading to unsafe or even dangerous behavior of the system.

A natural idea to handle disturbances is to design robust CBFs that account for worst-case scenarios \cite{jankovic2018robust, nguyen2021robust, buch2022robust}.
Although this approach provides strict safety guarantees,
{\wxy
it is often overly conservative in the sense that the state of the system is kept far away from the boundary of the safe set \cite{wang2023disturbance}}.
Moreover, large disturbance bounds may render the {\wxyr quadratic program} infeasible\cite{xiao2022adaptive}.
{\wxyr Inspired by input-to-state stability in control theory}, input-to-state safety CBF was developed in \cite{kolathaya2019input,alan2022safe,alan2023control} {\wxyr to relax the strict safety guarantee, allowing} the system trajectory to enter a slightly enlarged safe set.
However, these above-mentioned approaches (\cite{jankovic2018robust, nguyen2021robust, buch2022robust}, \cite{kolathaya2019input,alan2022safe,alan2023control}) are {\wxyr limited} to matched disturbances, i.e., disturbances coming into CBF condition through the same channel as control inputs.
This significantly restricts their practical applications, where disturbances may be unmatched.
Moreover, these approaches can only handle constraints with relative degrees of one or two, and cannot {\wxyr handle arbitrary-relative-degree} constraints, which is more general, yet much harder. 
Therefore, constructing robust CBFs for constraints with {\wxyr arbitrary relative degree under unmatched disturbances} is worthy of further investigation.

High-order CBF {\wxyr (HOCBF)} was proposed in \cite{xiao2021high} to handle constraints with arbitrary relative degree.
Unfortunately, unmatched disturbances may introduce their derivatives of different orders in {\wxyr HOCBF} design \cite{xiao2024robust}, which further complicates the safe control problem.
To estimate unmatched disturbances and their derivatives of different orders, disturbance observers (DOs) are {\wxyr recently} integrated with high-order CBFs to actively reject disturbances  \cite{ersin2025safety, wang2025safety}.
{\hw But DO-based CBFs strongly rely on the assumption that the disturbances {\wxyr are} differentiable to a certain order, and the bounds of both disturbances and their derivatives of different orders {\wxyr are} known. This is, however, not the case in most practical applications.} For example, stochastic disturbances such as wind gusts \cite{qian2020Path} are typically non-differentiable.
Non-differentiable unmatched disturbances pose significant challenges for {\wxyr HOCBF} design.
To address this issue, \cite{breeden2023robust} transformed the original high-relative-degree constraint into a robust {\wxyr form with} relative degree one, thus avoiding the derivatives of disturbance.
{\hw However, constructing such CBFs is very tricky, if possible, when constraint has a relative degree greater than two.
Consequently, constructing a unified robust CBF capable of handling constraints with arbitrary relative degree under general bounded disturbances, either differentiable or non-differentiable, either matched or unmatched, is very interesting, {\hwz yet remains largely open}.}

It is also worth noting that almost all existing robust CBF designs, such as worst-case CBFs \cite{jankovic2018robust, nguyen2021robust, buch2022robust}, \cite{xiao2022adaptive}, \cite{breeden2023robust}, input-to-state safety CBFs \cite{kolathaya2019input, alan2022safe, alan2023control}, and DO-based CBFs \cite{ersin2025safety,wang2025safety},  \cite{immersion2024wang,zhou2025temporal,Peng2024Adaptive} require the knowledge of disturbance bound.
While a recent work \cite{sun2024safety} {\wxyr eliminates} the need for disturbance bounds, it poses a {\wxyr strong} assumption that disturbances must be generated by an exosystem with known dynamics.
Both requirements significantly restrict the practical applications of robust CBFs, as disturbances in real-world systems (e.g., white noise) often lack known bounds and structured dynamics.
{\wxyr
To address this limitation, a robust control barrier function was proposed in a very recent work \cite{shen2025composite} to reject disturbances without any a priori knowledge.
However, this approach cannot handle high-relative-degree constraints under unmatched disturbances.
}

{\hw Motivated by the aforementioned statements, this paper aims to construct {\wxyr a new class of disturbance rejection CBFs (DRCBFs)} capable of rejecting general bounded disturbances.}
%
% the key limitations of existing literature in robust CBFs are summarized as follows:
% 1) they cannot reject non-differentiable unmatched bounded disturbances;
% 2) either disturbance dynamics or bounds are needed.
%
% In this paper, we address these limitations by studying a general robust control problem, where disturbance is unmatched (with matched disturbance as a special case), bounded with unknown structure, and unknown disturbance bound (known bounds as a special case).
% %
% Then we propose a class of disturbance rejection CBFs (DRCBFs) for strict safety guarantee of system subject to general bounded disturbances.
First, we assume the disturbance bound is known and propose a {\wxyr knowledge-based DRCBF (kDRCBF)} by recursively differentiating CBFs and defining its {\wxyr conservative} form  using disturbance bounds. Then we propose {\wxyr a reciprocal-compensated DRCBF (rDRCBF) by replacing the disturbance bound with a reciprocal-like term}, whose magnitude grows to infinity as the {\wxyr state trajectory} approaches the boundary of the safe set, thus {\wxyr removing the requirement of any knowledge of disturbances.}
% these limitations by studying a general robust control problem, where disturbance is unmatched (with matched disturbance as a special case), bounded with unknown structure, and unknown disturbance bound (known bounds as a special case).
% %
% Then we propose a class of disturbance rejection CBFs (DRCBFs) for strict safety guarantee of system subject to general bounded disturbances.
%
The main contributions are summarized as follows:

{\wxy
\begin{enumerate}
\item
{\wxyr
A knowledge-based DRCBF is proposed to  reject general bounded disturbances.
Compared with existing robust CBFs for unmatched disturbances \cite{ersin2025safety}, \cite{wang2025safety}, \cite{breeden2023robust}, kDRCBF can handle constraints with arbitrary relative degree under non-differentiable unmatched disturbances.}
% eliminates restrictive assumptions on the continuity of disturbance and the relative degree of system while guaranteeing safety.

\item 
{\wxyr  
On the basis of kDRCBF, we further propose a reciprocal-compensated DRCBF that strictly guarantees safety without needing any knowledge of the disturbances. 
This capability distinguishes our rDRCBF from} existing robust CBFs which require either disturbance bounds \cite{jankovic2018robust, nguyen2021robust,buch2022robust}, \cite{kolathaya2019input,alan2022safe,alan2023control,xiao2024robust,ersin2025safety,wang2025safety}, \cite{immersion2024wang,zhou2025temporal,Peng2024Adaptive}, or the disturbances dynamics \cite{sun2024safety}.
%
% To the best of our knowledge, this is the first robust CBF that does not rely on any prior knowledge of disturbances.
{\wxyr Furthermore, compared with \cite{shen2025composite}, our rDRCBF  can handle constraints with arbitrary relative degree under unmatched disturbances.}

\item
We develop a systematic parameter design for DRCBFs to reduce conservativeness.
Further, we introduce tunable parameters into the {\wxyr rDRCBFs} framework, enabling flexible expansion of the safe set.
This allows the state to get closer to the boundary of the original safe set without sacrificing safety guarantees, thus effectively balancing robustness and conservativeness.
%
% {\wxy 
% Reciprocal CBF is an energy function that grows infinity when state approaches the boundary of safe set.
% %
% Utilizing this property, our adaptive DRCBF introduces reciprocal CBF into high-order CBF to over-approximate disturbance term, thus rendering a subset of the safe set forward invariant.
% %
% This safety subset can be flexibly enlarged via tunable parameters, which exhibits non-conservative characteristic of our adaptive DRCBF while maintaining robust safety.
% }
% {\color{red}what are the advantages of ZCBFs and RCBFs?}
% %
% reduce conservativeness while maintaining robust safety.
\end{enumerate}
}

\emph{\textbf{Notations:}} 
The set of real numbers, positive real numbers, non-negative real numbers and nonnegative integers are denoted by $\mathbb{R}$, $\mathbb{R}_{> 0}$, $\mathbb{R}_{\geq 0}$ and $\mathbb{N}$.
Given any $i,j \in \mathbb{N}$ and $i < j$, define $\mathbb{N}_{i:j} = \{ i,i+1,\cdots,j \}$.
Both Euclidean norm of a vector and Frobenius norm of a matrix are denoted by $\lVert \cdot \rVert$.
{\wxyr
A continuous function $\alpha:\mathbb{R}_{\geq 0}\rightarrow\mathbb{R}_{\geq0}$ is said to belong to class $\mathcal{K}_\infty$ if $\alpha(0) = 0$, $\lim_{r\rightarrow\infty}\alpha(r) = \infty$ and it is strictly increasing.
A continuous function $\alpha:\mathbb{R}\rightarrow\mathbb{R}$ is said to belong to extended class $\mathcal{K}_\infty$ if $\alpha(0) = 0$, $\lim_{r\rightarrow-\infty}\alpha(r) = -\infty$, $\lim_{r\rightarrow\infty}\alpha(r) = \infty$ and it is strictly increasing.
}
%
% For a time-varying bounded signal $v(t)$, $\lVert v \rVert_\infty = \sup_{t \in \mathbb{R}_{\geq 0}} \lVert v(t) \rVert$.

%==================================
\section{Preliminaries and problem formulation} \label{Sec Formulation}
\subsection{High-order Control Barrier Function}
Consider a nominal nonlinear system 
\begin{align} \label{eq:nominal_sys}
    \dot{x} = f(x)+g(x)u,
\end{align}
where $x \in \mathbb{R}^n$ and $u\in \mathbb{R}^p$ are the state  and the control input of the system, respectively; $f: \mathbb{R}^n \rightarrow \mathbb{R}^n$ and   $g: \mathbb{R}^n \rightarrow \mathbb{R}^{n\times p}$ are locally Lipschitz continuous.
The safe set is described as 
\begin{align} \label{eq:safety_set}
    \mathscr{C} = \{x \in \mathbb{R}^n: \; b(x) \geq 0 \}, 
\end{align}
where $b : \mathbb{R}^n  \rightarrow \mathbb{R}$ is {\wxy sufficiently} differentiable with respect to \eqref{eq:nominal_sys}.
The input relative degree (IRD) of $b(x)$ for \eqref{eq:nominal_sys} is:
\begin{definition}\label{def:IRD} 
(IRD \cite{khalil2002nonlinear}):
% Consider a safe set $\mathscr{C}$.
%
% The input relative degree of $b(x)$ on $\mathscr{C}$ with respect to system \eqref{eq:nominal_sys} is defined as $R_I \in \mathbb{N}_{1:m}$ if $\lVert L_gL_f^{k} b(x) \rVert =0, ~\forall k\in \mathbb{N}_{0:R_I-2}$ and $\lVert L_g\mathcal{L}_f^{R_I-1} b(x) \rVert \neq 0 \text{ for all } x \in \mathscr{C}$, where $L_f b$ and $L_g b$ are Lie derivatives of $b$ along $f$ and $g$, respectively.
{\wxyr 
The input relative degree of $b(x)$ with respect to system \eqref{eq:nominal_sys} is the number of times $b(x)$ must be differentiated along \eqref{eq:nominal_sys} until the control input $u$ shows.}
% The input relative degree of $b(x)$ on $\mathscr{C}$ with respect to system \eqref{eq:nominal_sys} is defined as an integer $m$ if $L_gL_f^{k} b(x) =0, ~\forall k\in \mathbb{N}_{0:m-2}$ and $L_gL_f^{m-1} b(x)  \neq 0 \text{ for all } x \in \mathscr{C}$, where $L_f b$ and $L_g b$ are Lie derivatives of $b$ along $f$ and $g$, respectively.
\end{definition}

For sufficiently differentiable function $b(x)$ with IRD $m$, define a series of sets for all $i\in\mathbb{N}_{1:m}$ as
% {\color{red}this should be $\mathbb{N}_{1:R_I}$? The same applies to all the equations/notations below} {\wxy I have redefined the IRD as m and DRD as r.}:
\begin{align} \label{eq:classical_HOCBF_set}
    \mathscr{X}_i = \{ x \in \mathbb{R}^n :~\vartheta_{i-1}(x)\geq 0\},   
\end{align}
where $\vartheta_i$ satisfies
\begin{align} \label{eq:hocbf}
    \vartheta_i(x) = \dot{\vartheta}_{i-1}(x)+\alpha_i(\vartheta_{i-1}(x)),~\vartheta_0(x) = b(x)
\end{align}
{\wxyr and $\alpha_{i}$ is a sufficiently differentiable extended class $\mathcal{K}_\infty$ function.}
Then the high-order control barrier function (HOCBF) can defined as follows.
\begin{definition}\label{def:HOCBF} 
(HOCBF \cite{xiao2021high}, \cite{xiao2024robust})
{\wxyr Let the sets $\mathscr{X}_i$ and the functions $\vartheta_i$, $i\in\mathbb{N}_{1:m}$ be defined by \eqref{eq:classical_HOCBF_set} and \eqref{eq:hocbf}, respectively.}
A sufficiently differentiable function $b(x)$ is {\wxyr a high-order control barrier function} of {\wxyr order $m$} for system \eqref{eq:nominal_sys} {\wxyr if there exist extended class $\mathcal{K}_\infty$ functions $\alpha_1,\alpha_2,\cdots,\alpha_m$} such that $x(0) \in \mathscr{X}:=\cap_{i=1}^m \mathscr{X}_i$ and
% \begin{align} \label{eq:HOCBF}
%     \mathop{\rm{sup}}\limits_{u \in \mathbb{R}^p} \left\{ L^{m}_{f} b+ L_gL^{m-1}_{f}b~u + \mathcal{O}(x) \right\} \geq -\alpha_m(\vartheta_{m-1}),
% \end{align}
% where $\mathcal{O}(x) = \sum_{i=1}^{m-1} L_f^i(\alpha_{m-i} \circ \vartheta_{m-i-1})(x)$ and $\circ$ denotes the composition of functions.
% \end{definition}
\begin{align} \label{eq:HOCBF}
    \mathop{\rm{sup}}\limits_{u \in \mathbb{R}^p} \left\{ L_{f} \vartheta_{m-1}(x)+ L_g\vartheta_{m-1}(x)u \right\} \geq -\alpha_m(\vartheta_{m-1}(x))
\end{align}
for all $x \in \mathscr{X}$, {\wxyr where $L_f$ and $L_g$ denote the Lie derivatives along $f$ and $g$, respectively.}
\end{definition}

The existence of such $b(x)$ guarantees the existence of a control input $u$ that renders the set $\mathscr{X}$ forward invariant.
{\wxyr
In \eqref{eq:hocbf}, the extended class $\mathcal{K}_\infty$ functions are typically chosen as linear functions.
Specifically, $\alpha_i(\vartheta_{i-1})$ takes the form $p_i \vartheta_{i-1}$, where $p_i \in \mathbb{R}_{>0}$ for all $i \in \mathbb{N}_{1:m}$.}
Then \eqref{eq:hocbf} can be rewritten as a linear combination of $b^{(i)}$ (see \cite{nguyen2016exponential}), i.e.,
\begin{align} \label{eq:hocbf_linear}
    \vartheta_i(x) = b^{(i)}(x) + \sum_{j=0}^{i-1} c_{j}^i b^{(j)}(x),~i\in\mathbb{N}_{1:m},
\end{align}
where $c_j^i,~j\in \mathbb{N}_{0:i-1}$ are parameters of polynomial $\chi_i(s)=s^i + c^i_{i-1} s^{i-1} + \cdots + c^i_0$ with eigenvalues of $ p_1,p_2,\cdots,p_i$.
% {\wxy
% In literature, CBFs are commonly combined with CLFs to reformulate a safety-critial control problem into a sequence of quadratic programs (QPs), enabling computationally efficient solutions.
% }

\subsection{Problem Formulation}
In practice, system \eqref{eq:nominal_sys} is usually susceptible to external disturbances, i.e.,
\begin{align} \label{eq:dis_sys}
    \dot{x} = f\left(x\right) + g(x)u + h(x)d,
\end{align}
where $h: \mathbb{R}^n \rightarrow \mathbb{R}^{n\times q}$ is locally Lipschitz continuous, and $d \in \mathbb{R}^q$ is an unknown disturbance satisfying
$$\lVert d(t)\rVert \leq \mathcal{D},~~~\forall t\geq 0$$
for some $\mathcal{D}\in \mathbb{R}_{>0}$.
%
% {\wxyr It is assumed that disturbances do not affect the control channel .}
%

\begin{remark}
    The disturbance $d$ considered in this paper  is rather general in the sense that it can be either differentiable or non-differentiable, either matched or unmatched, either with a known bound or unknown bound, either generated randomly or by an exosystem. Thus it includes almost all bounded disturbances as special cases, such as  \cite{jankovic2018robust, nguyen2021robust,buch2022robust,wang2023disturbance,kolathaya2019input,alan2022safe,alan2023control,xiao2024robust,ersin2025safety,wang2025safety,immersion2024wang,zhou2025temporal,Peng2024Adaptive,sun2024safety}.
    Moreover, $h(x)d$ may also be unbounded if $x$ is not restricted to a compact set. 
    %
    % {\wxyr The assumption $[g,h] = 0$ is made for precluding the case where $u$ appears in $b^{(i)}$ for some $i < m$ due to the existence of $d$.}
   %  it includes the following disturbances, frequently encountered in literature, as special cases: 
   % \begin{itemize}
   %  \item Disturbance $d$ is matched and bounded by a known positive real number (cf. \cite{jankovic2018robust}).
   %  \item Disturbance $d$ is continuously differentiable and matched, and both $d$ and its derivative are bounded by known positive real numbers (cf. \cite{immersion2024wang}).
   %  \item Disturbance $d$ is continuously differentiable and unmatched, and both $d$ and its derivatives are bounded by known positive real numbers (cf. \cite{wang2023disturbance}, \cite{ersin2025safety}).
   %  \item Disturbance $d$ is unmatched and bounded by a known positive real number (cf. \cite{breeden2023robust}, \cite{xiao2024robust}).
   %  \item Disturbance $d$ is $(m-r)^{th}$ continuously differentiable and unmatched, and $d^{(i)}, \forall i \in \mathbb{N}_{0,m-r}$ are bounded by a set of known positive real numbers (cf. \cite{wang2025safety}).
\end{remark}

To quantify the effect of $d$ on {\wxyr safety constraint}, we introduce the concept of disturbance relative degree (DRD) {\wxyr of $b$ for \eqref{eq:dis_sys}.}
\begin{definition} \label{def:DRD}
(DRD):
{\wxyr 
The disturbance relative degree of $b(x)$ with respect to system \eqref{eq:dis_sys} is the number of times $b(x)$ must be differentiated along \eqref{eq:dis_sys} until the disturbance $d$ shows.}
% Consider a safe set $\mathscr{C}$ {\color{black}defined as in (\ref{eq:safety_set}) and the corresponding} sufficiently differentiable function $b(x)$ of IRD $m$ with respect to \eqref{eq:dis_sys}.
% %
% The disturbance relative degree of $b(x)$ on $\mathscr{C}$ with respect to system \eqref{eq:dis_sys} is defined as an integer $r$ if $ L_hL_f^{k} b(x) =0, ~\forall k \in \mathbb{N}_{0:r-2}$ and $ L_hL_f^{r-1} b(x) \neq 0 ~~ \textnormal{for all } x \in \mathscr{C}$, where $L_h b$ is Lie derivative of $b$ along $h$.
% \footnote{Definition \ref{def:DRD} regards the case where $r \leq m$ and the disturbance cannot change the IRD of $b$ for the nominal system \eqref{eq:nominal_sys}.}
\end{definition}

We now formulate the {\wxyr robust safe optimal control problem.}
\begin{problem} \label{problem:p1}
    Consider the disturbed system \eqref{eq:dis_sys} and the safe set $\mathscr{C}$ in \eqref{eq:safety_set}.
    Let $b(x)$ be a sufficiently differentiable function of IRD $m$ and DRD $r$ with respect to \eqref{eq:dis_sys}, where $r\leq m$. 
    Design a control law $u$ that stabilizes system \eqref{eq:dis_sys} while solving the following optimal control problem
    \begin{align}
        &\min_{u\in\mathbb{R}^q} ~~~ J\big(u(t)\big) \nonumber\\
        &~\textnormal{s.t.}~~~~~ {\wxyr x(t) \in \mathcal{X},} ~~~~ \forall t\geq 0,
    \end{align}
    where $J:\mathbb{R}^p \rightarrow \mathbb{R}$ is a cost function {\wxyr and $\mathcal{X}$ is a subset of $\mathscr{C}$.}
\end{problem}

\begin{remark}
Problem \ref{problem:p1} covers two fundamental cases when safety constraint may be violated due to disturbances, i.e., matched cases with $r = m$ when $d$ affects $b^{(m)}$ and unmatched cases with $r < m$ when $d$ affects $b^{(i)}$ for all $i\in\mathbb{N}_{r:m}$.
For the trivial case when $r > m$, $d$ will not affect  $b^{(i)}, \forall i\in\mathbb{N}_{1:m}$. This can be directly addressed by conventional HOCBF, and thus is not considered in this paper.
Since $r$ is generally unknown a priori in practice, we assume $r=1$ in the sequel for the convenience of theoretical analysis, while bearing in mind that the proposed framework works for any positive integer $r$.
\end{remark}

\section{{\wxyr Knowledge-based DRCBF}} \label{Sec:DRCBF}
% In this section, we propose a disturbance rejection CBF (DRCBF) framework for Problem \ref{problem:p1} under the worst-case disturbance and scaling method.
{\wxy
In this section, we assume the disturbance bound $\mathcal{D}$ is known, and propose a {\wxyr knowledge-based DRCBF (kDRCBF)}.
To better understand the motivation of this paper, we start with an example of adaptive cruise control (ACC) problem.
}

\subsection{Motivating Example} \label{Sec:ACC}
Consider an ACC system with the following dynamics
\begin{align} \label{eq:ACC}
&\dot{D}(t) = v_l - v_f(t) + d_u(t), \notag \\
&\dot{v}_f(t) = -\frac{1}{M} F_r(v_f(t)) + \frac{1}{M}u(t) + d_m(t),
\end{align}
    % \begin{align} \label{eq:ACC}
    % \begin{bmatrix}
    %     \dot{D} \\
    %     \dot{v}_f
    % \end{bmatrix} =
    % \begin{bmatrix}
    %     v_l - v_f \\
    %     -\frac{1}{M}F_r(v_f)
    % \end{bmatrix}
    % +
    % \begin{bmatrix}
    %     0 \\
    %     \frac{1}{M}
    % \end{bmatrix}
    % u
    % +
    % \begin{bmatrix}
    %     d_u \\
    %     d_m
    % \end{bmatrix},
    % \end{align}
where $v_l$ and $v_f$ are the velocities of the lead vehicle and the following vehicle, respectively; $D$ is the distance between them;  $M$ is the mass of the follower; $u$ is the input force to the follower; $d_u$ represents unmatched disturbances on velocity;
% such as dynamic road friction and measurement noise of the velocity; 
$d_{m}$ represents matched disturbance on acceleration;
and $F_r = f_0+f_1v_f+f_2v_f^2$ models the aerodynamic drag, with $f_0$, $f_1$ and $f_2$ being empirically determined \cite{ames2017control}.

Let $x = [D,v_f]^\top$.
The follower is required to satisfy a safety constraint described by $b(x) = D - D_{min}\geq 0$, and $D_{min}$ is the safe distance.
The IRD and DRD of $b(x)$ with respect to the system \eqref{eq:ACC} are unmatched, with $m = 2$ and $r = 1$.
For simplicity, we assume that the leader drives at a constant speed.
Then the time derivatives of $b$ along \eqref{eq:ACC} are
\begin{subequations}\label{eq:robust_h_function}
\begin{align}
    &\dot{b}(x) = v_l(t) - v_f(t) + d_{u}(t), \label{eq:robust_h_function_a}\\
    &\ddot{b}(x)  =\frac{1}{M}F_r(v_f(t)) - \frac{1}{M}u(t)+ \dot{d}_{u}(t) - d_{m}(t). \label{eq:robust_h_function_b}
\end{align}
\end{subequations}
From \eqref{eq:robust_h_function}, $d_u$ and $\dot{d}_u$ not only pollute $b$ via matched channel (see \eqref{eq:robust_h_function_b}) but also affect it from unmatched channel (see \eqref{eq:robust_h_function_a}).
Then, the HOCBF {\wxyr for the ACC system \eqref{eq:ACC}, which can be expressed as a linear combination of $b,~\dot{b}$ and $\ddot{b}$ (see \eqref{eq:hocbf_linear}),} is influenced simultaneously by $d_m$, $d_u$ and $\dot{d}_u$.
Existing robust CBFs need the bounds of $d_m$, $d_u$ and $\dot{d}_u$,
which are, however, generally intractable to obtain in practice, let alone the case when $d_u$ is non-differentiable.
This issue fails most CBF strategies, and will be addressed by our proposed DRCBFs.

\subsection{Knowledge-based DRCBF}
{\wxy
{\wxyr 
To avoid involving high-order derivatives of $d$ in HOCBFs, we recursively construct robust CBFs by replacing $d$ in \eqref{eq:hocbf_linear} with its bound $\mathcal{D}$.
For \eqref{eq:dis_sys}, the time derivative of $b(x)$ is
\begin{align*}
    \dot{b}(x)= L_f b(x) + L_g b(x)u + L_hb(x)d,
\end{align*}
where $L_h$ denote the Lie derivative along $h$ and} $L_g b = 0$ if $m > 1$.
{\wxyr Using norm operation and the fact $\lVert d\rVert\leq \mathcal{D}$, we define a lower bound of $\dot{b}(x)$ for \eqref{eq:dis_sys} in the following form}
% {\color{green} More motivations how the below is defined, as shown in the response.}
\begin{align} \label{eq:tilde_b_1}
    \bar{b}(x) = L_f b(x) +L_gb(x)u - \mathcal{D}\lVert L_h b (x)\rVert  ,
\end{align}
 % It is trivial to show that $\bar{b}(x)$ is a lower bound of the time derivative of $b(x)$ {\wxyr for system \eqref{eq:dis_sys}} with respect to $d$.
%
However, {\wxyr since the derivative of a norm function $p(r) = \lVert r\rVert$ is $\partial p(r)/\partial r = r/\lVert r\rVert$}, $\lVert L_h b(x) \rVert$ is not differentiable {\wxyr at some singular points $x\in \mathbb{R}^n$ such that $L_h b(x)=0$.}
% {\color{green} make it clear at which point.} 
{\wxyr And thus, it} cannot be used to further construct a higher-order CBF.
The fact $(\frac{1}{2\sqrt{k}}\lVert L_h b(x)\rVert - \sqrt{k}\mathcal{D})^2 \geq 0$  implies that
\begin{align} \label{eq:Young_inequality}
\frac{1}{4k}\lVert L_h b(x)\rVert^2 + k \mathcal{D}^2 \geq \mathcal{D} \lVert L_h b(x)\rVert 
\end{align} for all $k > 0$.
Compared with the term $\mathcal{D}\lVert L_h b(x)\rVert $, its upper bound $\frac{1}{4k}\lVert L_h b(x)\rVert^2 + k \mathcal{D}^2$ is differentiable, which benefits the higher-order CBF design.
Thus, we obtain a differentiable lower bound of the time derivative of $b$ {\wxyr along \eqref{eq:dis_sys}} as
\begin{align} \label{eq:tilde_b_1_modify}
    \tilde{b}(x) = L_f b(x)  +L_gb(x)u- \frac{1}{4k}\lVert L_h b(x)\rVert^2- k \mathcal{D}^2.
\end{align}
Interestingly, the introduced parameter $k$ can be adjusted to reduce conservativeness. This will be illustrated in Remark \ref{rm:para_design_drcbf}.}

% the {\wxy sufficiently differentiable function $b(x)$ of IRD $m$} {\color{red} what is the relative degree of $b(x)$?}, 
Following the above manipulation, we can recursively define a sequence of functions $\tilde{b}_i$ as
\begin{align} \label{eq:tilde_b}
    &\tilde{b}_i(x) = \tilde{w}_i(x)  - k_i \mathcal{D}^2, ~~~ i\in\mathbb{N}_{1:m-1} \nonumber \\
    &\tilde{b}_m(x) = \tilde{w}_m(x)  - k_m \mathcal{D}^2+ \beta_u(x) u,
\end{align}
where
% {\color{red}there is no $b_0$ in the above.} {\wxy The $i$ is redefined by starting from $0$.}
$\tilde{w}_i = L_f\tilde{b}_{i-1} - \frac{1}{4k_i} \lVert L_h\tilde{b}_{i-1} \rVert^2$, $\tilde{b}_0(x) = b(x)$, $k_i$ is a positive parameter and $\beta_u = L_g\tilde{b}_{m-1}$. 
% {\wxy Here I don't discuss the definition of $\tilde{w}_i$ because I provide the transformation process in \eqref{eq:tilde_b_1}. How do you think?~~:)}
%

The following lemma shows that the form of \eqref{eq:tilde_b} recursively defines a lower bound of the time derivative of $\tilde{b}_i$.

% The following lemma shows that, , we iteratively find the lower bound of $\dot{b}$ while preventing $d$ from polluting the CBF conditions.

\begin{lemma} \label{lm:b_i>b_j}
    {\wxyr Consider the system \eqref{eq:dis_sys}. 
    Let the functions $\tilde{b}_i$, $i\in \mathbb{N}_{1:m}$ be defined by \eqref{eq:tilde_b}.}
    Then the following inequality
    \begin{align*}
        \dot{\tilde{b}}_{i-1}(x{\wxyr (t)}) \geq \tilde{b}_{i}(x{\wxyr (t)}), ~~~~ i \in \mathbb{N}_{1:m}
    \end{align*}
    holds for all $t \geq 0$.
\end{lemma}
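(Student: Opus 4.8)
The plan is to prove each inequality $\dot{\tilde{b}}_{i-1} \geq \tilde{b}_i$ directly for its own index $i$, by differentiating $\tilde{b}_{i-1}$ along the \emph{disturbed} dynamics \eqref{eq:dis_sys} and then lower-bounding the disturbance contribution. Since each $\tilde{b}_{i-1}$ is a function of $x$ only (by construction it is built from $b$ through Lie derivatives along $f$ and $h$ exclusively, never along $g$), differentiating along \eqref{eq:dis_sys} yields the decomposition
\begin{align*}
\dot{\tilde{b}}_{i-1} = L_f \tilde{b}_{i-1} + L_g \tilde{b}_{i-1}\, u + L_h \tilde{b}_{i-1}\, d .
\end{align*}
The strategy is then to show that the $f$- and $h$-terms alone already dominate $\tilde{w}_i - k_i \mathcal{D}^2$, while the $g$-term is either absorbed into the definition of $\tilde{b}_m$ (when $i=m$) or vanishes (when $i<m$).

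The core estimate is purely algebraic and is independent of $i$. Applying the Cauchy--Schwarz inequality together with $\lVert d\rVert \leq \mathcal{D}$ gives $L_h \tilde{b}_{i-1}\, d \geq -\lVert L_h \tilde{b}_{i-1}\rVert\, \mathcal{D}$. Feeding this into the Young-type inequality \eqref{eq:Young_inequality}, with $L_h b$ replaced by $L_h \tilde{b}_{i-1}$ and $k$ by $k_i$, namely $\mathcal{D}\lVert L_h \tilde{b}_{i-1}\rVert \leq \frac{1}{4k_i}\lVert L_h \tilde{b}_{i-1}\rVert^2 + k_i \mathcal{D}^2$, I obtain
\begin{align*}
L_f \tilde{b}_{i-1} + L_h \tilde{b}_{i-1}\, d \geq L_f \tilde{b}_{i-1} - \frac{1}{4k_i}\lVert L_h \tilde{b}_{i-1}\rVert^2 - k_i \mathcal{D}^2 = \tilde{w}_i - k_i \mathcal{D}^2 .
\end{align*}
This is exactly $\tilde{b}_i$ up to the control term, so the lemma reduces to controlling $L_g \tilde{b}_{i-1}\, u$.

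It remains to dispose of the control term. For $i=m$ the construction sets $\beta_u = L_g \tilde{b}_{m-1}$, so $L_g \tilde{b}_{m-1}\, u = \beta_u u$ is precisely the extra summand carried by $\tilde{b}_m$, and the two estimates combine to give $\dot{\tilde{b}}_{m-1} \geq \tilde{w}_m - k_m \mathcal{D}^2 + \beta_u u = \tilde{b}_m$. For $i \in \mathbb{N}_{1:m-1}$ I instead need $L_g \tilde{b}_{i-1} = 0$, since $u$ is unconstrained and must not survive on the right-hand side. I would establish this by an auxiliary induction on $j = i-1 \in \mathbb{N}_{0:m-2}$: the base case $L_g \tilde{b}_0 = L_g b = 0$ is the first identity of the IRD in Definition \ref{def:IRD}, and the inductive step differentiates $\tilde{b}_{j+1} = L_f \tilde{b}_j - \frac{1}{4k_{j+1}}\lVert L_h \tilde{b}_j\rVert^2 - k_{j+1}\mathcal{D}^2$ along $g$, invoking the IRD relations $L_g L_f^{k} b = 0$ for $k \leq m-2$ together with the induction hypothesis.

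I expect this last induction to be the main obstacle, and the real technical content of the argument. The difficulty is that $\tilde{b}_{j+1}$ is not simply $L_f^{j+1} b$: it carries the disturbance-correction term $\lVert L_h \tilde{b}_j\rVert^2$, whose $g$-derivative produces mixed second-order Lie derivatives of the type $L_g L_h \tilde{b}_j$ rather than the clean $L_g L_f^{k} b$ that the IRD directly annihilates. Closing the induction therefore requires exploiting the relative-degree structure carefully, for instance rewriting $L_g L_h \tilde{b}_j$ through the Lie bracket $L_{[g,h]}\tilde{b}_j$ and the vanishing of the lower-order $g$-derivatives, and this is where I would concentrate the detailed computation.
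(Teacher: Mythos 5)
Your core estimate is exactly the paper's. The paper expands $\bigl(\tfrac{1}{2\sqrt{k_i}}L_h\tilde{b}_{i-1}^\top + \sqrt{k_i}\,d\bigr)^2 \geq 0$ to get $L_h\tilde{b}_{i-1}\,d \geq -\tfrac{1}{4k_i}\lVert L_h\tilde{b}_{i-1}\rVert^2 - k_i\lVert d\rVert^2$ and only afterwards invokes $\lVert d\rVert\leq\mathcal{D}$, whereas you pass through Cauchy--Schwarz and then Young's inequality with $\mathcal{D}$ inserted immediately; these yield the same bound, so that part of your argument coincides with the intended proof.

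The gap is the step you flagged yourself and then left open: you never establish $L_g\tilde{b}_{i-1}=0$ for $i\in\mathbb{N}_{1:m-1}$, and without it the term $L_g\tilde{b}_{i-1}\,u$ survives in $\dot{\tilde{b}}_{i-1}$ and, since $u$ is unconstrained, can make the claimed inequality fail. You should know that the paper does not prove this either: in \eqref{eq:tilde_b_dot} it simply writes $\dot{\tilde{b}}_{i-1}=L_f\tilde{b}_{i-1}+L_h\tilde{b}_{i-1}d$ for $i<m$, tacitly assuming the $g$-derivative vanishes. Your instinct that the induction is the real difficulty is correct, and in fact it cannot be closed from the stated hypotheses alone. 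Already for $m\geq 3$ one has $L_g\tilde{b}_1=L_gL_fb-\tfrac{1}{4k_1}L_g\lVert L_hb\rVert^2$; the first term vanishes by Definition \ref{def:IRD}, but the second involves derivatives of $L_hb$ along $g$, which the IRD and DRD conditions do not constrain. Concretely, take $\dot{x}_1=x_2+h_1(x_3)d$, $\dot{x}_2=x_3$, $\dot{x}_3=u$ with $b=x_1$, so that $m=3$ and $r=1$; then $L_g\tilde{b}_1=-\tfrac{1}{2k_1}h_1(x_3)h_1'(x_3)\neq 0$ in general, and $\dot{\tilde{b}}_1\geq\tilde{b}_2$ is violated for suitable $u$. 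So the lemma in this generality needs either an additional structural assumption on $h$ (e.g., that each $L_h\tilde{b}_{j}$ is itself annihilated by $L_g$ up to the appropriate order) or the $L_g\tilde{b}_{i-1}u$ terms must be carried explicitly in the recursion \eqref{eq:tilde_b}. Your proposal is incomplete precisely at this point; the mitigating observation is that you have correctly located a step the paper's own proof passes over in silence, and for $m=2$ (the case used in all of the paper's examples) the issue does not arise because only $L_g\tilde{b}_0=L_gb=0$ is needed.
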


\begin{proof}
    The time derivative of $\tilde{b}_i(x)$ along \eqref{eq:dis_sys} is
    \begin{align} \label{eq:tilde_b_dot}
        \dot{\tilde{b}}_{i-1}(x) &=   L_f\tilde{b}_{i-1}(x) +  L_h\tilde{b}_{i-1}(x) d,  ~~~ i\in\mathbb{N}_{1:m-1} \nonumber\\
        \dot{\tilde{b}}_{m-1}(x) &=  L_f\tilde{b}_{m-1}(x)  + L_h\tilde{b}_{m-1}(x) d + \beta_u(x)u .
    \end{align}
    Noting that $(\frac{1}{2\sqrt{k_i}}L_h \tilde{b}_{i-1}^\top + \sqrt{k_i}d)^2 \geq 0$ holds for arbitrary positive $k_i$,  one can obtain a lower bound of \eqref{eq:tilde_b_dot} as 
    %
    % Using Young's inequality {\color{red} can you write this in more details on the following equation? e.g., there is no time derivative of $\tilde{b}_{i-1}$ in the equation below}~
    % {\wxy I rewrite the proof to make it more convenience for reading.}
    % % {\wxy Sure. The idea using in this part is very similar to your 2024 CDC paper on rejecting stochastic disturbance; yet the scaling we use is based on young's inequality, not the log function. Here I rewrite the proof to make it more convenience for reading.}
    % \begin{align*}
    %     L_h \tilde{b}_{i-1}(x) d \geq - \frac{1}{4k_i}\lVert L_h \tilde{b}_{i-1}(x)\rVert^2 - k_i \lVert d\rVert^2,
    % \end{align*}
    % one can always obtain the lower bound of the time derivative of $\tilde{b}_{i} ~ \forall i \in \mathbb{N}_{0:m-1}$ as
    % {\color{red}I am stopping here}
    \begin{align} \label{eq:tilde_b_lowerbound}
        \dot{\tilde{b}}_{i-1}(x) &\geq   \tilde{w}_i(x) - k_i \lVert d(t)\rVert^2,~~~ i\in\mathbb{N}_{1:m-1} \nonumber\\
        \dot{\tilde{b}}_{m-1}(x) &\geq  \tilde{w}_m(x) + \beta_u(x) u - k_m \lVert d(t)\rVert^2.
    \end{align}
    Recalling \eqref{eq:tilde_b}, {\wxyr one can further put \eqref{eq:tilde_b_lowerbound} as}
    \begin{align*}
        \dot{\tilde{b}}_{i-1}{\wxyr (x)} &\geq   \tilde{b}_i{\wxyr (x)} -  k_i(\lVert d(t)\rVert^2 - \mathcal{D}^2),~~~ i\in\mathbb{N}_{1:m}.
    \end{align*}
    {\wxyr Since $\lVert d(t)\rVert \leq \mathcal{D}$, one has} $\dot{\tilde{b}}_{i-1}(x) \geq  \tilde{b}_{i}(x)$, $ \forall i \in \mathbb{N}_{1:m}$.
\end{proof}

By replacing $b^{(i)}$ in \eqref{eq:hocbf_linear} with $\tilde{b}_i$ in \eqref{eq:tilde_b}, we define a series of functions {\wxyr in the following form}
% \begin{align} \label{eq:dohcbf_functions}
%         \tilde{\varphi}_i(x) = &  \tilde{b}_i(x)  +\sum_{j=0}^{i-1} c_j^{i}  \tilde{b}_j(x), ~~ i \in \mathbb{N}_{1:m-1} \nonumber \\
%     \tilde{\varphi}_m(x) = &  \tilde{w}_m(x) - k_m \mathcal{D}^2 +\beta_u(x) u +\sum_{j=0}^{m-1} c_j^m \tilde{b}_j(x).
% \end{align}
\begin{align} \label{eq:dohcbf_functions}
        \tilde{\varphi}_i(x) =  \tilde{b}_i(x)  +\sum_{j=0}^{i-1} c_j^{i}  \tilde{b}_j(x), ~~ i \in \mathbb{N}_{1:m}.
\end{align}
Further define a sequence of sets associated with \eqref{eq:dohcbf_functions} as
\begin{align} \label{eq:dohcbf_sets}
    \mathscr{C}_i = \{ x\in \mathbb{R}^n:~ \tilde{\varphi}_{i-1} (x)\geq 0\}, ~~ i\in\mathbb{N}_{1:m},
\end{align}
where $\tilde{\varphi}_{0} (x) = b(x)$.
%
% Let $\bar{\mathscr{C}}=\cap_{i\in\mathbb{N}_{1:m}} \mathscr{C}_i$.
% %
% Obviously, $\bar{\mathscr{C}} \subset \mathscr{C}$.
%
{\hw It is worth noting that, by using \eqref{eq:dohcbf_functions}, we successfully avoid the {\wxyr emergence} of the disturbance $d$ and its derivatives of different orders on the CBF {\wxyr constraints} \eqref{eq:hocbf_linear}.}

Now, we {\wxyr introduce} the definition of {\wxyr knowledge-based disturbance rejection control barrier function}.
\begin{definition} \label{def:DRCBF}
    {\wxyr (kDRCBF)
     Let the functions $\tilde{\varphi}_i$ and the sets $\mathscr{C}_i$, $i\in\mathbb{N}_{1:m}$ be defined by \eqref{eq:dohcbf_functions} and \eqref{eq:dohcbf_sets}, respectively.}
    Let $b(x)$ be an HOCBF of {\wxyr order} $m$ for the nominal system \eqref{eq:nominal_sys}. 
    Then $b(x)$ is a {\wxyr knowledge-based DRCBF of order} $m$ for the disturbed system \eqref{eq:dis_sys} if there exist  positive {\wxyr constants} $p_1, p_2, \cdots , p_m$ such that $x(0)\in \bar{\mathscr{C}}{\wxyr :=\cap_{i\in\mathbb{N}_{1:m}} \mathscr{C}_i}$ and 
    % \begin{align} \label{eq:DRCBF}
    %     \sup_{u\in \mathbb{R}^p} \bigg\{ c_0^m  b(x)+\sum_{j=1}^m  c_j^m \tilde{w}_j(x)  + \beta_u(x) u\bigg\} \geq \bar{k} \mathcal{D}^2 ,
    % \end{align}
    \begin{align} \label{eq:DRCBF}
        \sup_{u\in \mathbb{R}^p} \bigg\{ \tilde{w}_m(x) +\beta_u(x) u+\sum_{j=0}^{m-1} c_j^m \tilde{b}_j(x)\bigg\} \geq k_m \mathcal{D}^2
    \end{align}
    for all $x \in \bar{\mathscr{C}}$.
\end{definition}

% Instead of using HOCBF, we use \eqref{eq:dohcbf_functions} as the safety certification for disturbed system \eqref{eq:dis_sys}.

The following theorem illustrates that any Lipschitz continuous controller $u$ satisfying \eqref{eq:DRCBF} can {\wxyr robustly render the set $\bar{\mathscr{C}}$ forward invariant,} thus guaranteeing that $b(x(t))\geq 0,~\forall t\geq 0$.

\begin{thm}\label{thm:t1}
    Let $b(x)$ be a {\wxyr kDRCBF of order} $m$ for system \eqref{eq:dis_sys} and the sets $\mathscr{C}_i$, $i\in\mathbb{N}_{1:m}$ be {\wxyr defined by \eqref{eq:dohcbf_sets}.
    If $x(0) \in \bar{\mathscr{C}}$,}
    then any Lipschitz continuous controller $u$ satisfying 
    \begin{align} \label{eq:DRCBF_condition}
       \tilde{w}_m(x) +\beta_u(x) u+\sum_{j=0}^{m-1} c_j^m \tilde{b}_j(x)\geq k_m \mathcal{D}^2
    \end{align}
     renders $\bar{\mathscr{C}}$ forward invariant {\wxyr for \eqref{eq:dis_sys}, and $b(x(t)) \geq 0,\forall t\geq0$.}
\end{thm}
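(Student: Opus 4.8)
The plan is to reduce the forward invariance of $\bar{\mathscr{C}}$ to a chain of scalar comparison-lemma arguments, one per layer $\tilde{\varphi}_{i-1}$, exactly as in the nominal HOCBF setting but with Lemma \ref{lm:b_i>b_j} doing the work that the disturbance would otherwise spoil. The first observation is that condition \eqref{eq:DRCBF_condition} is nothing but $\tilde{\varphi}_m(x)\geq 0$: substituting $\tilde{b}_m(x)=\tilde{w}_m(x)-k_m\mathcal{D}^2+\beta_u(x)u$ into $\tilde{\varphi}_m=\tilde{b}_m+\sum_{j=0}^{m-1}c_j^m\tilde{b}_j$ and rearranging reproduces \eqref{eq:DRCBF_condition} verbatim. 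Hence a controller satisfying the hypothesis of the theorem enforces $\tilde{\varphi}_m\geq 0$ at every $x\in\bar{\mathscr{C}}$.

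The core step I would establish is the differential inequality
\[
\dot{\tilde{\varphi}}_{i-1}(x)\;\geq\;\tilde{\varphi}_i(x)-p_i\,\tilde{\varphi}_{i-1}(x),\qquad i\in\mathbb{N}_{1:m},
\]
the robust analogue of the nominal identity $\vartheta_i=\dot{\vartheta}_{i-1}+p_i\vartheta_{i-1}$. To get it I would use the factorization $\chi_i(s)=(s+p_i)\chi_{i-1}(s)$ underlying \eqref{eq:hocbf_linear}, which yields the coefficient recursions $c_{i-1}^i=c_{i-2}^{i-1}+p_i$, $c_j^i=c_{j-1}^{i-1}+p_i c_j^{i-1}$ and $c_0^i=p_i c_0^{i-1}$; since the $p_l$ are positive, every $c_j^i$ is positive. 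Differentiating $\tilde{\varphi}_{i-1}=\tilde{b}_{i-1}+\sum_{j=0}^{i-2}c_j^{i-1}\tilde{b}_j$ along \eqref{eq:dis_sys} and applying Lemma \ref{lm:b_i>b_j} termwise, i.e. $\dot{\tilde{b}}_j\geq\tilde{b}_{j+1}$, the positivity of the $c_j^{i-1}$ lets me preserve the inequality under the weighted sum; regrouping with the $+p_i\tilde{\varphi}_{i-1}$ term and matching coefficients via the recursions above reproduces exactly $\tilde{\varphi}_i$. This is where the disturbance and all of its derivatives are discarded, so I expect it to be the technical crux — the delicate point being that Lemma \ref{lm:b_i>b_j} only gives a one-sided bound, so it may be propagated through the sum only because the coefficients are nonnegative.

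With the differential inequality in hand, I would run the nested comparison argument on a closed-loop trajectory starting at $x(0)\in\bar{\mathscr{C}}$. So long as $x(t)\in\bar{\mathscr{C}}$ the CBF condition gives $\tilde{\varphi}_m\geq 0$, whence $\dot{\tilde{\varphi}}_{m-1}\geq -p_m\tilde{\varphi}_{m-1}$; since $\tilde{\varphi}_{m-1}(x(0))\geq 0$, the comparison lemma forces $\tilde{\varphi}_{m-1}(x(t))\geq 0$. Feeding this back gives $\dot{\tilde{\varphi}}_{m-2}\geq -p_{m-1}\tilde{\varphi}_{m-2}$ and hence $\tilde{\varphi}_{m-2}(x(t))\geq 0$, and peeling off one layer at a time down to $\tilde{\varphi}_0=b$ yields $\tilde{\varphi}_i(x(t))\geq 0$ for all $i\in\mathbb{N}_{0:m-1}$, i.e. $x(t)\in\bar{\mathscr{C}}$ and in particular $b(x(t))\geq 0$.

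Finally I would close the mild circularity — the CBF condition is only assumed on $\bar{\mathscr{C}}$, yet the comparison argument invokes it along the trajectory — with the standard continuity/contradiction step: Lipschitzness of $u$, $f$, $g$, $h$ yields a unique maximal solution, and were $x$ to leave $\bar{\mathscr{C}}$ there would be a first time $t^{\star}$ at which some $\tilde{\varphi}_{i-1}(x(t^{\star}))=0$ while \eqref{eq:DRCBF_condition}, and hence the comparison bounds above, still held on $[0,t^{\star}]$, a contradiction. Since $\bar{\mathscr{C}}$ is closed, $x(t)$ therefore remains in $\bar{\mathscr{C}}$ for all $t\geq 0$, establishing forward invariance and $b(x(t))\geq 0$.
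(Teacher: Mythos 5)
Your proposal is correct and follows essentially the same route as the paper: identify \eqref{eq:DRCBF_condition} with $\tilde{\varphi}_m\geq 0$, use Lemma \ref{lm:b_i>b_j} together with the positivity of the $c_j^i$ (from the factorization $\chi_i(s)=(s+p_i)\chi_{i-1}(s)$) to get $\dot{\tilde{\varphi}}_{i-1}+p_i\tilde{\varphi}_{i-1}\geq\tilde{\varphi}_i$, and then peel off the layers from $i=m$ down to $i=1$. The only cosmetic differences are that the paper invokes Nagumo's theorem where you use the comparison lemma, and that you make explicit the first-exit-time argument closing the circularity of assuming the CBF condition only on $\bar{\mathscr{C}}$, a point the paper leaves implicit.
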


\begin{proof}
First, we introduce a sequence of functions
\begin{align}\label{eq:thm1_proof:1}
    \varphi_i(x) = \dot{\tilde{\varphi}}_{i-1}(x) + p_{i}\tilde{\varphi}_{i-1}(x)
\end{align}
with $\varphi_0(x) = \tilde{\varphi}_ 0(x)$.
Considering $\tilde{b}_0(x) = b(x)$, the definition \eqref{eq:dohcbf_functions} and Lemma \ref{lm:b_i>b_j}, {\wxyr one can show that}
\begin{align}\label{eq:thm1_proof:2}
    \varphi_1 = & \dot{\tilde{b}}_0 +p_1 \tilde{b}_0  \nonumber \\
    \geq & \underbrace{\tilde{b}_1 + p_1 \tilde{b}_0}_{ \tilde{b}_1 + c_0^1 \tilde{b}_0}  \equiv \tilde{\varphi}_1,  \nonumber\\
    \varphi_2 \geq & \underbrace{ \tilde{b}_2 + (p_1 + p_2)\tilde{b}_1 + p_2p_1\tilde{b}_0}_{ \tilde{b}_2 + c_1^2\tilde{b}_1 + c_0^2 \tilde{b}_0}   \equiv \tilde{\varphi}_2,  \nonumber\\
    \vdots &  \nonumber\\
    \varphi_m \geq & \tilde{b}_m + \cdots + c_1^m\tilde{b}_1 + c_0^m \tilde{b}_0 \equiv \tilde{\varphi}_m.
\end{align}

{\wxyr By \eqref{eq:dohcbf_functions}, one has $\tilde{\varphi}_{m}(x)\geq 0$ under} any Lipschitz continuous controller $u$ satisfying \eqref{eq:DRCBF_condition}.
Then \eqref{eq:thm1_proof:2} implies $ \varphi_{m}(x) \geq 0$, which further implies  that $\dot{\tilde{\varphi}}_{m-1}$ by \eqref{eq:thm1_proof:1}. 
{\wxyr According to Lemma 1 in \cite{xiao2021high}}, we have $\tilde{\varphi}_{m-1}(x(t)) \geq 0,~ \forall t\geq0$ since $\tilde{\varphi}_{m-1}(x(0)) \geq 0$.
Again, {\wxyr by \eqref{eq:thm1_proof:1} and \eqref{eq:thm1_proof:2},} $\tilde{\varphi}_{m-1}(x(t)) \geq 0$ implies $ \varphi_{m-1}(x(t))  \geq 0$, and therefore $\tilde{\varphi}_{m-2}(t) \geq 0, \forall t\geq 0$ since $\tilde{\varphi}_{m-2}(x(0)) \geq 0$ and $\dot{\tilde{\varphi}}_{m-2}(x)+p_{m-1}\tilde{\varphi}_{m-2}(x) \geq 0$.
Iteratively, one can conclude that $x(t) \in \mathscr{C}_i, ~ \forall i \in \mathbb{N}_{1:m}, ~\forall t \geq 0$.
Therefore, {\wxyr the set} $\bar{\mathscr{C}}$ is forward invariant {\wxyr for \eqref{eq:dis_sys}}.
Recalling $\bar{\mathscr{C}} \subset \mathscr{C}$, we have $b(x(t))\geq0,~ \forall t\geq 0$.
\end{proof}

\subsection{Optimal Control with {\wxyr kDRCBF}}
{\wxyr It has been shown} that {\wxyr kDRCBF} can guarantee the safety of system {\wxyr \eqref{eq:dis_sys}}, i.e., $b(x{\wxyr (t)})\geq 0,~\forall t \geq 0$.
To stabilize system \eqref{eq:dis_sys}, we resort to the input-to-state stability CLF (ISS-CLF).
\begin{definition}
    (ISS-CLF \cite{sontag1995on})
    A continuously differentiable positive definite function $V:\mathbb{R}^n \rightarrow\mathbb{R}$ is an ISS-CLF function for \eqref{eq:dis_sys} if there exist class $\mathcal{K}_\infty$ functions $\tilde{\alpha}_1$ and $\tilde{\alpha}_2$ such that 
    \begin{align} \label{eq:ISS-CLF}
        \inf_{u\in \mathbb{R}^p} \{ L_f V + L_gVu + L_h Vd\} \leq -\tilde{\alpha}_1(V) + \tilde{\alpha}_2(\lVert d\rVert)
    \end{align}
    for all $x \in \mathbb{R}^n$ and $d \in \mathbb{R}^q$.
\end{definition}

Let $V(x)$ be an ISS-CLF for system \eqref{eq:dis_sys}.
Then any Lipschitz continuous controller $u$ satisfying $L_fV+L_gVu \leq -\sigma V$ for a {\wxyr positive $\sigma$} can guarantee the input-to-state stability (Definition 2.1, \cite{sontag1995on}) of system \eqref{eq:dis_sys}.
To achieve both stability and safety, we reformulate Problem \ref{problem:p1} as the following {\wxyr quadratic program}
\begin{align*}
    &\operatorname*{min}_{(u,\delta) \in \mathbb{R}^p \times \mathbb{R}} 
   ~~~ J(u) + \rho {\wxyr \delta}^2~~~~~~~~~~~~\\
    &~~~~~\text{s.t. }~~~ L_fV(x)+L_gV(x)u \leq -\sigma V(x)+{\wxyr \delta}\\
    &~~~~~~~~~~~~ \tilde{w}_m(x) +\beta_u(x) u+\sum_{j=0}^{m-1} c_j^m \tilde{b}_j(x)\geq k_m \mathcal{D}^2,
\end{align*}
where ${\wxyr \delta}$ is a slack variable {\wxyr to relax ISS-CLF condition} for guaranteeing the quadratic program to be feasible and {\wxyr $\rho$ is a penalty factor to make $\delta$ in the solution as small as possible}.

\begin{remark} \label{rm:para_design_drcbf}
    (\textbf{Parameter design of {\wxyr kDRCBF}})
    In {\wxyr kDRCBF}, we introduce a set of parameters $k_1,k_2,\cdots,k_m$ for control design.
    {\wxyr Now we show} how to design these parameters to reduce the conservativeness caused by the {\wxyr over-approximation of $d$.}
    This is achieved by finding the minimum of $\frac{1}{4k_i}\lVert L_h \tilde{b}_{i-1}\rVert^2 + k_i \mathcal{D}^2$ with respect to $k_i$.
    If $\lVert L_h \tilde{b}_{i-1}\rVert$ is upper bounded by a known positive value $\eta_i$, then $\frac{1}{4k_i}\lVert L_h \tilde{b}_{i-1}\rVert^2 + k_i \mathcal{D}^2$ is upper bounded by $\frac{1}{4k_i}\eta_i^2 + k_i \mathcal{D}^2$.
    Let $\varrho_i(k_i) = \frac{1}{4k_i}\eta_i^2 + k_i \mathcal{D}^2$.
   Then the minimum of $\varrho(k_i)$ is achieved at $k^*_i = \frac{\eta_i}{2\mathcal{D}}$, which solves $\frac{\partial \varrho_i(k_i)}{\partial k_i} = 0$.
    Based on these manipulations, the parameters of DRCBF  can be chosen as $k^*_1 = \frac{\eta_1}{2\mathcal{D}}, k^*_2 = \frac{\eta_2}{2\mathcal{D}},\cdots,k^*_m = \frac{\eta_m}{2\mathcal{D}}$. These parameters can provide the least conservativeness of the DRCBF.
    By doing so, we can find the least-conservative upper bound of CBF under the worst-case disturbance as
    \begin{align*}
        L_f\tilde{b}_{i-1} - &\frac{1}{4k_i} \lVert L_h\tilde{b}_{i-1} \rVert^2 - k_i\mathcal{D}^2\\
       & ~~~~~~\leq L_f\tilde{b}_{i-1} - \frac{1}{4k^*_i} \lVert L_h\tilde{b}_{i-1} \rVert^2 - k^*_i\mathcal{D}^2
    \end{align*}
    for arbitrary $(k_1,k_2,\cdots,k_m) \in \mathbb{R}_{> 0} \times \mathbb{R}_{> 0}\times,\cdots ,\times\mathbb{R}_{> 0}$.
    
\end{remark}

% To design a robust controller for , we establish the following sliding surface 
% \begin{align*}
%     s(x) = x(t) - x(0) - \int_0^t (f(x)+g(x)u)d\tau
% \end{align*}

\section{{\wxyr Reciprocal-compensated} DRCBF} \label{Sec:aDRCBF}
Although the proposed {\wxyr kDRCBF} can enforce robust safety, 
%
% However, this result is developed under the worst-case disturbance, thereby leads to undesirable conservativeness.
% %
% In this section, we introduce an adaptive DRCBF (aDRCBF) to address Problem \ref{problem:p1} while reducing conservativeness of DRCBF.
this safety guarantee is developed under the assumption that $\mathcal{D}$, i.e., {\wxyr the disturbance bound}, is known, which is not the case in many practical applications.
Now we shall introduce a new class of DRCBF, called {\wxyr reciprocal-compensated DRCBF (rDRCBF)}, to address Problem \ref{problem:p1} without knowing $\mathcal{D}$.

\subsection{{\wxyr Reciprocal-compensated} DRCBF}
%
% We first give a lower bound of the time derivative of $b(x)$ with respect to $d$ as 
% \eqref{eq:tilde_b_1}:
% \begin{align} \label{eq:tilde_psi_1}
%     \tilde{\psi}_1(x) = L_f \tilde{\psi}_0(x) - \mathcal{D}\lVert L_h\tilde{\psi}_0(x) \rVert,
% \end{align}
% where $\tilde{\psi}_0(x) = b(x)$ and $L_g \tilde{\psi}_0 = 0$ if $m >1$.
% %
% To obtain a differentiable over-approximation of $\mathcal{D}\lVert L_h\tilde{\psi}_0(x) \rVert$, we modify \eqref{eq:tilde_psi_1} by following \eqref{eq:tilde_b_1_modify} in the form
% \begin{align} \label{eq:tilde_psi_1_mod}
%     \tilde{\psi}_1(x) = L_f \tilde{\psi}_0(x) - \frac{1}{4k_1}\lVert L_h\tilde{\psi}_0(x) \rVert^2 - k_1 \mathcal{D}^2,
% \end{align}
% where $k_1>0$ is introduced to provide more freedom in CBF design.

% Since $\mathcal{D}$ is bounded, there always exists a continuously differentiable function $\Gamma(x)$, independent of $\mathcal{D}$, such that $\Gamma(x) \geq \mathcal{D}^2$ when $x$ approaches the boundary of the safe set.
{\wxyr When $\mathcal{D}$ in \eqref{eq:tilde_b_1} is unknown, we replace $\mathcal{D}$ with a sufficiently differentiable function $\Gamma(x)$ which is} independent of $\mathcal{D}$ {\wxyr and tends to infinity as} $x$ approaches the boundary of the safe set, {\wxyr i.e., $\Gamma(x) \rightarrow \infty$ as $b(x) \rightarrow 0$}.
Then similar to \eqref{eq:tilde_b_1_modify}, we define a function 
\begin{align} \label{eq:tilde_psi_1_mod_mod}
    \tilde{\psi}(x) = L_f b(x) +L_gb(x)u -\frac{1}{4k} \lVert L_hb(x) \rVert^2-k \Gamma(x),
\end{align}
% which provides a lower bound of $\dot{b}$ when {\wxyr $b$ decreases to zero.}
{\wxyr Substituting \eqref{eq:tilde_b_1_modify} into \eqref{eq:tilde_psi_1_mod_mod} yields
\begin{align} \label{eq:use}
    \tilde{\psi}(x) = \tilde{b}(x) -k(\Gamma(x) - \mathcal{D}^2).
\end{align}
For any positive number $\mathcal{D}$, the property of $\Gamma(x)$ guarantees that there exists a positive number $\varepsilon$ such that $\Gamma(x) \geq \mathcal{D}^2$ whenever $b(x) \leq \varepsilon$.
Thus, according to \eqref{eq:use} and Lemma \ref{lm:b_i>b_j}, $\tilde{\psi}(x)$ provides a lower bound of $\dot{b}(x)$ for system \eqref{eq:dis_sys} when $x$ gets close to the boundary of safe set.
}
{\wxyr Following a similar development as in the construction of kDRCBFs}, we recursively define a series of functions as
\begin{align} \label{eq:tilde_psi}
    &\tilde{\psi}_i(x) = \tilde{\pi}_i(x) - k_i \Gamma_{i-1}(x) ,~~~~ i\in\mathbb{N}_{1:m-1} \nonumber \\
    &\tilde{\psi}_m(x) = \tilde{\pi}_m(x)- k_m \Gamma_{m-1}(x) + \tilde{\beta}_u(x) u,
\end{align}
where $\tilde{\pi}_i =L_f \tilde{\psi}_{i-1} - \frac{1}{4k_i}\lVert L_h \tilde{\psi}_{i-1} \rVert$, $\tilde{\psi}_0(x) = b(x)$, $k_i$ is a positive parameter, $\tilde{\beta}_u = L_g\tilde{\psi}_{m-1}$ and $\Gamma_i:\mathbb{R}^n \rightarrow \mathbb{R}$ is {\wxyr a sufficiently} differentiable function to be designed later.

By replacing $b^{(i)}$ in \eqref{eq:hocbf_linear} with $\tilde{\psi}_i$ in \eqref{eq:tilde_psi}, we define a series of functions $\tilde{\phi}_i$ as
% \begin{align} \label{eq:adaptve_dohcbf_functions}
%     \tilde{\phi}_i(x) = &  \tilde{\psi}_i(x) +\sum_{j=0}^{i-1} c_j^i  \tilde{\psi}_j(x), ~~ i \in \mathbb{N}_{1:m-1} \nonumber \\
%     \tilde{\phi}_m(x) = & \tilde{\pi}_m(x)  - k_m \Gamma_{m-1}(x) + \tilde{\beta}_u(x) u +\sum_{j=0}^{m-1} c_j^m \tilde{\psi}_j(x)
% \end{align}
\begin{align} \label{eq:adaptve_dohcbf_functions}
    \tilde{\phi}_i(x) = &  \tilde{\psi}_i(x) +\sum_{j=0}^{i-1} c_j^i  \tilde{\psi}_j(x), ~~ i \in \mathbb{N}_{1:m}
\end{align}
with $\tilde{\phi}_0(x) = \tilde{\psi}_0(x)$.
We then define a sequence of sets $\mathcal{C}_i, \forall i\in\mathbb{N}_{1:m}$ associated with \eqref{eq:adaptve_dohcbf_functions} in the form 
\begin{align} \label{eq:adaptive_dohcbf_sets}
    \mathcal{C}_i = \{ x\in \mathbb{R}^n ~:~ \tilde{\phi}_{i-1} (x)\geq 0\}, \nonumber \\
    \partial \mathcal{C}_i = \{ x\in \mathbb{R}^n ~:~ \tilde{\phi}_{i-1} (x) = 0\},\nonumber \\
    \operatorname{Int}(\mathcal{C}_i) = \{ x\in \mathbb{R}^n ~:~ \tilde{\phi}_{i-1} (x) > 0\}.
\end{align}
To over-approximate the bound of the disturbance, i.e., $\mathcal{D}$, we design $\Gamma_{i}$ in the following form
\begin{align*}
    \Gamma_i(x) = r_iB(\tilde{\phi}_{i}(x) ), ~~  i \in \mathbb{N}_{0:m-1}
\end{align*}
where {\wxyr $r_i$ is a positive number} and $B:\mathbb{R}_{>0} \rightarrow \mathbb{R}_{\geq0}$ is a sufficiently differentiable {\wxyr reciprocal-like} function such that
\begin{align} \label{eq:B_property}
    1/\breve{\alpha}_1(\tilde{\phi}_i) \leq B(\tilde{\phi}_i) \leq 1/\breve{\alpha}_2(\tilde{\phi}_i),~~\tilde{\phi}_i\in \mathbb{R}_{>0}
\end{align}
holds for some class $\mathcal{K}_\infty$ functions $\breve{\alpha}_1, \breve{\alpha}_2$.
Obviously, $B(\tilde{\phi}_{i-1})$ grows rapidly to infinity as $x$ approaches $\partial \mathcal{C}_i$, i.e., the boundary of $\mathcal{C}_i$ (cf. reciprocal CBFs in \cite{ames2017control}).
In other words, $\mathcal{D}$ will be upper bounded by $\Gamma_{i-1}$ when $x$ gets close enough to $\partial \mathcal{C}_i$.
And then, we can use this property to enforce robust safety without knowing $\mathcal{D}$.
{\wxyr A valid candidate for $B$ is $1/\tilde{\phi}_{i}$.}

Before proceeding, we present the following lemma.
\begin{lemma} \label{lm:b_i>b_j_adaptive}
    Consider the system \eqref{eq:dis_sys}.
    Let the functions $\tilde{\psi}_i$ and the sets $\mathcal{C}_i$ be defined by \eqref{eq:tilde_psi} and \eqref{eq:adaptive_dohcbf_sets}, respectively.
    %
    % Let $c_j^i~~ \forall j\in \mathbb{N}_{0:i},i\in\mathbb{N}_{1:m} $ be the positive parameters of polynomial $c^i_i s^i + c^i_{i-1} s^{i-1} + \cdots + c^i_0$ with the eigenvalues of $\tilde{p}_{i}, \tilde{p}_{i-1}, \cdots, \tilde{p}_1$, where $\tilde{p}_i > 0 ~~ \forall i \in \mathbb{N}_{1:m}$.
%
    Let $\bar{\mathcal{C}}=\cap_{i\in\mathbb{N}_{1:m}} \operatorname{Int}(\mathcal{C}_i)$.  
    There always exists a neighborhood of $\partial \mathcal{C}_l$, denoted by $\mathcal{N}(\partial \mathcal{C}_l)$, such that $\forall l \in \mathbb{N}_{1:m}, i \in \mathbb{N}_{l:m}$
    \begin{align} \label{eq:lm_b_i_adaptive_inequality}
         \sum_{j=0}^{i-1} c_j^{i-1} (\dot{\tilde{\psi}}_j(x) - \tilde{\psi}_{j+1}(x) ) {\wxyr \geq \mathcal{H}}, ~ \forall x \in \mathcal{N}(\partial \mathcal{C}_l) \cap \bar{\mathcal{C}},
    \end{align}
   where {\wxyr $\mathcal{H}$ is a positive constant and} $c_s^s = 1$ for all $s \in \mathbb{N}_{1:m}$.
\end{lemma}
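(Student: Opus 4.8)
The plan is to reduce the claim about the weighted sum to a statement about a single dominating summand. First I would compute, along the disturbed dynamics \eqref{eq:dis_sys}, the individual increment $\dot{\tilde{\psi}}_j(x) - \tilde{\psi}_{j+1}(x)$ for each $j \in \mathbb{N}_{0:i-1}$. Since the input relative degree is $m$, one has $L_g\tilde{\psi}_j = 0$ for $j \leq m-2$, while for $j = m-1$ the control term $\tilde{\beta}_u(x)u$ appears identically in both $\dot{\tilde{\psi}}_{m-1}$ and $\tilde{\psi}_m$ and therefore cancels. Consequently every increment is independent of $u$ and reads
\begin{align*}
    \dot{\tilde{\psi}}_j(x) - \tilde{\psi}_{j+1}(x) = (L_h\tilde{\psi}_j(x))d + \tfrac{1}{4k_{j+1}}\lVert L_h\tilde{\psi}_j(x)\rVert^2 + k_{j+1}\Gamma_j(x),
\end{align*}
exactly mirroring the construction in Lemma \ref{lm:b_i>b_j}.

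Next, applying the Young-type inequality used in \eqref{eq:Young_inequality}, namely $(L_h\tilde{\psi}_j)d \geq -\tfrac{1}{4k_{j+1}}\lVert L_h\tilde{\psi}_j\rVert^2 - k_{j+1}\lVert d\rVert^2$, the quadratic terms cancel and I obtain the clean per-term bound $\dot{\tilde{\psi}}_j(x) - \tilde{\psi}_{j+1}(x) \geq k_{j+1}(\Gamma_j(x) - \lVert d\rVert^2)$. Two facts make this useful. The coefficients $c_j^{i-1}$ are the strictly positive coefficients of $\chi_{i-1}(s) = \prod_{k=1}^{i-1}(s + p_k)$ with $p_k > 0$, so every weight in the sum is positive (in particular $c_{l-1}^{i-1} > 0$, using $c_s^s = 1$ when $i = l$). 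And since $\Gamma_j(x) = r_j B(\tilde{\phi}_j(x)) \geq 0$ and $\lVert d\rVert \leq \mathcal{D}$, each increment admits the $x$-uniform lower bound $\dot{\tilde{\psi}}_j - \tilde{\psi}_{j+1} \geq -k_{j+1}\mathcal{D}^2$ on $\bar{\mathcal{C}}$, where staying in $\bar{\mathcal{C}}$ guarantees $\tilde{\phi}_j > 0$ so that every $\Gamma_j$ is well defined and nonnegative.

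The core of the argument is then a domination estimate near $\partial\mathcal{C}_l$. Splitting off the $j = l-1$ summand gives
\begin{align*}
    \sum_{j=0}^{i-1} c_j^{i-1}\big(\dot{\tilde{\psi}}_j - \tilde{\psi}_{j+1}\big) \geq c_{l-1}^{i-1} k_l\big(r_{l-1}B(\tilde{\phi}_{l-1}) - \mathcal{D}^2\big) - M_i,
\end{align*}
where $M_i := \sum_{j\neq l-1} c_j^{i-1} k_{j+1}\mathcal{D}^2$ is a finite constant. Because $\partial\mathcal{C}_l = \{\tilde{\phi}_{l-1} = 0\}$ and $B(\tilde{\phi}_{l-1}) \geq 1/\breve{\alpha}_1(\tilde{\phi}_{l-1}) \to \infty$ as $\tilde{\phi}_{l-1} \to 0^+$ by \eqref{eq:B_property}, the dominating term grows without bound as $x$ approaches $\partial\mathcal{C}_l$ from inside $\bar{\mathcal{C}}$, so there is an $\varepsilon_i > 0$ for which the right-hand side is positive whenever $0 < \tilde{\phi}_{l-1}(x) < \varepsilon_i$. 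Taking $\varepsilon := \min_{i \in \mathbb{N}_{l:m}} \varepsilon_i$ (a minimum over a finite set) and setting $\mathcal{N}(\partial\mathcal{C}_l) := \{x : \tilde{\phi}_{l-1}(x) < \varepsilon\}$ yields a single neighborhood on which \eqref{eq:lm_b_i_adaptive_inequality} holds for every $i \in \mathbb{N}_{l:m}$.

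The main obstacle is the uniformity of this construction rather than any single estimate. I must ensure the bound on the non-dominant terms is genuinely independent of $x$, which is precisely what $\Gamma_j \geq 0$ buys us: it prevents any competing term from diverging to $-\infty$, and in fact approaching any other boundary only helps since those terms then grow positively. I must also guarantee that one neighborhood serves all $i$ simultaneously, which is handled by the finite minimum over $i$. A secondary point to treat carefully is well-posedness at the boundary: the blow-up of $B$ is one-sided, so it is essential that $\partial\mathcal{C}_l$ is approached from within $\bar{\mathcal{C}}$ where $\tilde{\phi}_{l-1} > 0$, which is exactly the intersection $\mathcal{N}(\partial\mathcal{C}_l) \cap \bar{\mathcal{C}}$ taken in the statement.
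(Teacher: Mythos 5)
Your proposal is correct and follows essentially the same route as the paper's proof: the same Young-type bound $\dot{\tilde{\psi}}_j - \tilde{\psi}_{j+1} \geq k_{j+1}(\Gamma_j - \lVert d\rVert^2)$, followed by isolating the $j = l-1$ term, using $\Gamma_j \geq 0$ on $\bar{\mathcal{C}}$ to control the rest, and invoking the blow-up of $B(\tilde{\phi}_{l-1})$ near $\partial\mathcal{C}_l$ to dominate the bounded $\lVert d\rVert^2$ contributions. Your explicit finite minimum over $i$ to obtain a single neighborhood is a small tidiness improvement over the paper's presentation, but not a different argument.
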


\begin{proof}
    Similar to \eqref{eq:Young_inequality}, the time derivative of $\tilde{\psi}_i(x)$ along \eqref{eq:dis_sys} can be lower bounded as 
    \begin{align} \label{eq:tilde_psi_lower_bound}
        &\dot{\tilde{\psi}}_{i-1}(x) \geq \tilde{\pi}_{i}(x) - k_{i} \lVert d\rVert^2 ,~~i \in \mathbb{N}_{1:m-1} \nonumber\\
        &\dot{\tilde{\psi}}_{m-1}(x) \geq \tilde{\pi}_{m}(x) - k_{m} \lVert d\rVert^2  + \tilde{\beta}_{u}(x) u.
    \end{align}
    %
    % One can rewrite \eqref{eq:tilde_psi} as 
    % \begin{align} \label{eq:tilde_psi_rewrite}
    %     &k_{i}\Gamma_{i-1}(x) = \tilde{\pi}_{i}(x) - \tilde{\psi}_{i}(x),~~~ i \in \mathbb{N}_{1:m-1}\notag\\
    %     &k_{m}\Gamma_{m-1}(x) = \tilde{\pi}_{m}(x) - \tilde{\psi}_{m}(x) + \tilde{\beta}_{u}(x)u.
    % \end{align} 
    % By adding zero and substituting \eqref{eq:tilde_psi_rewrite} into \eqref{eq:tilde_psi_lower_bound}, 
  Recalling \eqref{eq:tilde_psi},  the inequality \eqref{eq:tilde_psi_lower_bound} can be further put as 
    \begin{align} \label{eq:tilde_psi_lower_bound_1}
        \dot{\tilde{\psi}}_{i-1}(x) \geq & \tilde{\psi}_{i}(x) + (\tilde{\pi}_{i}(x)  - \tilde{\psi}_{i}(x) - k_{i} \lVert d \rVert^2) \notag\\
        = & \tilde{\psi}_{i}(x) + k_{i}( \Gamma_{i-1}(x)- \lVert d \rVert^2),~~~~~~ i \in \mathbb{N}_{1:m-1}\notag\\
        \dot{\tilde{\psi}}_{m-1}(x) \geq & \tilde{\psi}_{m}(x) + (\tilde{\pi}_{m}(x)  - \tilde{\psi}_{m}(x)+\tilde{\beta}_u(x) u- k_{m} \lVert d \rVert^2) \notag \\
        = & \tilde{\psi}_{m}(x) + k_{m}( \Gamma_{m-1}(x)- \lVert d \rVert^2).
    \end{align}
    Considering $\Gamma_j =r_j B(\tilde{\phi}_j)$ and the fact that $r_j,~k_ j,~c_j^i > 0$ and $B \geq 0, ~\forall x \in \bar{\mathcal{C}}$, the inequality \eqref{eq:tilde_psi_lower_bound_1} further implies  
    \begin{align*}
        \sum_{j=0}^{i-1} c_j^{i-1} (\dot{\tilde{\psi}}_j(x) - \tilde{\psi}_{j+1}(x) ) \geq& \sum_{j=0}^{i-1}  k_{j+1}c_j^{i-1}(r_j B(\tilde{\phi}_j) - \lVert d \rVert^2) \\
         \geq&  k_l r_{l-1}  c_{l-1}^{i-1} B(\tilde{\phi}_{l-1})  - \breve{c}_{i-1}\lVert d \rVert^2
    \end{align*}
    for any $l \in \mathbb{N}_{1:i}$, where $\breve{c}_{i-1} = \sum_{j=0}^{i-1} k_{j+1} c_j^{i-1}$.
    It follows from \eqref{eq:B_property} that $B(\tilde{\phi}_{l-1}(x))$ approaches infinity as $x\rightarrow \partial \mathcal{C}_l$.
    Given arbitrary positive $\mathcal{O} \in \mathbb{R}_{>0}$, there always exists a small neighborhood $\mathcal{N}(\partial \mathcal{C}_l)$ such that $B(\tilde{\phi}_{l-1}(x)) \geq \mathcal{O}$ for all $x \in \mathcal{N}(\partial \mathcal{C}_l)\cap \bar{\mathcal{C}}$.
    Recalling that $\lVert d \rVert\leq \mathcal{D}$, and $\mathcal{D}$, $c_j^{i-1},~k_j$ are positive real numbers, one has that $\breve{c}_{i-1}\lVert d\rVert^2$ is bounded.
    Selecting $\mathcal{O} = \frac{\breve{c}_{i-1}\mathcal{D}^2{\wxyr +\mathcal{H}}}{k_l r_{l-1}  c_{l-1}^{i-1}}$ yields $k_l r_{l-1}  c_{l-1}^{i-1} B(\tilde{\phi}_{l-1})  - \breve{c}_{i-1}\lVert d \rVert^2 {\wxyr \geq \mathcal{H}}$ for all $x \in \mathcal{N}(\partial \mathcal{C}_i)\cap \mathcal{C}_i$, which completes the proof.
\end{proof}

% The following theorem shows that {\wxyr rDRCBF guarantees the existence of a control input $u$ that renders} the set $\bar{\mathcal{C}}$ forward invariant without knowing {\wxyr any information of $d$.}

% The following theorem shows that any Lipschitz continuous controller $u$ satisfying \eqref{eq:adaptive_DRCBF} can render the set $\bar{\mathcal{C}}$ forward invariant without prior knowledge of disturbance bound $\mathcal{D}$.
 
Now, we propose the definition of {\wxyr reciprocal-compensated disturbance rejection control barrier function.}
\begin{definition} \label{def:adaptive_DRCBF}
   {\wxyr(rDRCBF) Let the functions $\tilde{\phi}_i$ and the sets $\mathcal{C}_i$, $i\in\mathbb{N}_{1:m}$ be defined by \eqref{eq:adaptve_dohcbf_functions} and \eqref{eq:adaptive_dohcbf_sets}, respectively.}
    Let $b(x)$ be an HOCBF {\wxyr of order $m$} for the nominal system \eqref{eq:nominal_sys}.
    Then $b(x)$ is {\wxyr a reciprocal-compensated DRCBF of order $m$ for the disturbed system} \eqref{eq:dis_sys} if there exist positive values $p_1, p_2, \cdots,p_m$ such that $x(0)\in \bar{\mathcal{C}}:=\cap_{i\in\mathbb{N}_{1:m}} \operatorname{Int}(\mathcal{C}_i)$ and 
    \begin{align*} %\label{eq:adaptive_DRCBF}
        \sup_{u\in \mathbb{R}^p} \bigg\{\tilde{\pi}_m(x) + \tilde{\beta}_u(x) u +\sum_{j=0}^{m-1} c_j^m \tilde{\psi}_j(x) \bigg\} \geq k_m \Gamma_{m-1}(x)
    \end{align*}
    for all $x \in \bar{\mathcal{C}}$.
\end{definition}

\begin{thm} \label{thm:t2}
    % Consider the system \eqref{eq:dis_sys}.
    {\wxyr Let $b(x)$ be an rDRCBF of order $m$ for system \eqref{eq:dis_sys} and the sets $\mathcal{C}_i,~i\in\mathbb{N}_{1:m}$ be defined by \eqref{eq:adaptve_dohcbf_functions} and \eqref{eq:adaptive_dohcbf_sets}. }
    {\wxyr If $x(0) \in \bar{\mathcal{C}}$}, then any Lipschitz continuous controller $u$ satisfying
    \begin{align}\label{eq:aDRCBF_condition}
        \tilde{\pi}_m(x) + \tilde{\beta}_u(x) u +\sum_{j=0}^{m-1} c_j^m \tilde{\psi}_j(x) \geq k_m \Gamma_{m-1}(x)
    \end{align}
    renders $\bar{\mathcal{C}}$ forward invariant {\wxyr for \eqref{eq:dis_sys}}, and  $b(x(t))> 0, ~\forall t \geq 0$.
\end{thm}

\begin{proof}
First, we introduce a sequence of functions
\begin{align}\label{eq:thm2_proof:1}
    \phi_i(x) = \dot{\tilde{\phi}}_{i-1}(x) + p_{i}\tilde{\phi}_{i-1}(x), ~~{\wxyr i\in\mathbb{N}_{1:m}}
\end{align}
with $\phi_0 = \tilde{\phi}_ 0$.
Note that $\Gamma_{i-1}$ is a function of $\tilde{\phi}_{i-1}$ and $\tilde{\phi}_i = \sum_{j=0}^i c_j^i  \tilde{\psi}_j$.
Then $\Gamma_{i-1}$ always has higher input relative degree than $\tilde{\phi}_{i}$, which implies that no additional input term will be introduced by $\Gamma_{i-1}$.
Substituting \eqref{eq:adaptve_dohcbf_functions} into \eqref{eq:thm2_proof:1}, {\wxyr one has} 
\begin{align}\label{eq:thm2_proof:2}
    \phi_1 = & \dot{\tilde{\psi}}_0 +p_1 \tilde{\psi}_0  \nonumber \\
    = & \underbrace{\tilde{\psi}_1 +p_1 \tilde{\psi}_0}_{\tilde{\psi}_1 +c_0^1\tilde{\psi}_0} + c_0^0(\dot{\tilde{\psi}}_0-\tilde{\psi}_1), \nonumber \\
    \phi_2 = &\underbrace{ \tilde{\psi}_2 + (p_1 + p_2)\tilde{\psi}_1 + p_2p_1\tilde{\psi}_0}_{\tilde{\psi}_2 + c_1^2\tilde{\psi}_1 + c_0^2 \tilde{\psi}_0}+ \sum_{j=0}^{1} c_j^{1} (\dot{\tilde{\psi}}_j - \tilde{\psi}_{j+1} ), \nonumber \\
   \vdots &  \nonumber \\
   \phi_{m} = &\sum_{j=0}^{m}c^m_j \tilde{\psi}_j + \sum_{j=0}^{m-1} c_j^{m-1} (\dot{\tilde{\psi}}_j - \tilde{\psi}_{j+1} ) .
\end{align}
Noting that $\tilde{\phi}_i = \sum_{j=0}^i c_j^i  \tilde{\psi}_j$, one has
\begin{align} \label{eq:thm_2_proof:2}
    \phi_{i} = \tilde{\phi}_i + \sum_{j=0}^{i-1} c_j^{i-1} (\dot{\tilde{\psi}}_j(x) - \tilde{\psi}_{j+1}(x) ).
\end{align}
Combining \eqref{eq:thm2_proof:1} and \eqref{eq:thm_2_proof:2}, one can obtain that 
\begin{align} \label{eq:thm_2_proof:3}
    \dot{\tilde{\phi}}_{i-1} + p_i\tilde{\phi}_{i-1} = \tilde{\phi}_i + \sum_{j=0}^{i-1} c_j^{i-1} (\dot{\tilde{\psi}}_j(x) - \tilde{\psi}_{j+1}(x)).
\end{align}

%
% This function \eqref{eq:thm_2_proof:3} closely resembles the standard high-oder CBF defined in \eqref{eq:hocbf}, with the key distinction being the inclusion of an additional term $\Delta_i:=\sum_{j=0}^{i-1} c_j^{i-1} (\dot{\tilde{\psi}}_j - \tilde{\psi}_{j+1})$.
% Let $\Delta_i=\sum_{j=0}^{i-1} c_j^{i-1} (\dot{\tilde{\psi}}_j - \tilde{\psi}_{j+1})$.
% %
% If $\Delta_i(t) \geq 0, ~\forall t \geq 0, i \in \mathbb{N}_{1:m}$, then $\bar{\mathcal{C}}_i$ is forward invariant under any $u$ satisfying \eqref{eq:aDRCBF_condition} (See Theorem 4 in \cite{xiao2021high}).
% %
% However, $\Delta_i(t) \geq 0$ for all $t \geq 0$, is generally impossible since $d \neq 0$.
% %
% Next, we will show that $\Gamma_i$ can still render the safe set $\bar{\mathcal{C}}$ forward invariant for the case when $\Delta_i \geq 0$ fails to be satisfied for all $t$.
% We prove this by showing that $\Delta_i \geq 0$ can be strictly satisfied when $x$ approaches $\partial \mathcal{C}_{i}$.
%

Suppose that there exists a time instant $t_1$ such that the state trajectory approaches $\partial\mathcal{C}_{l}$ as $t\rightarrow t_1$ for some $l \in \mathbb{N}_{1:m}$, i.e., {\wxyr $\tilde{\phi}_{l-1}(t) \rightarrow 0$ as $t\rightarrow t_1$ while $\tilde{\phi}_{l-1}(t)>0 $ for all $t\in[0,t_1)$.}
Since {\wxyr the control input} $u$ is locally Lipschitz and $x(0) \in \bar{\mathcal{C}}$,  {\wxyr the closed-loop system admits} a unique continuous solution, {\wxyr and consequently}, the state trajectory must enter $\mathcal{N}(\partial \mathcal{C}_l)\cap\bar{\mathcal{C}}$ before approaching $\partial \mathcal{C}_l$.
{\wxyr Let $t_2 < t_1$ be the smallest time instant such that $x(t) \in \mathcal{N}(\partial \mathcal{C}_l)\cap\bar{\mathcal{C}}$ for all $t \in [t_2, t_1)$, and thus $\tilde{\phi}_{m-1}(x(t_2))>0$.
Lemma \ref{lm:b_i>b_j_adaptive} and \eqref{eq:thm_2_proof:3} imply that}
\begin{align} \label{eq:thm_2_proof:4}
    \dot{\tilde{\phi}}_{i-1}(x) + p_i\tilde{\phi}_{i-1}(x) {\wxyr \geq} \tilde{\phi}_i(x) + {\wxyr \mathcal{H}}, ~i \in \mathbb{N}_{l:m}
\end{align}
holds for all $x \in \mathcal{N}(\partial \mathcal{C}_l) \cap \bar{\mathcal{C}}$.
From \eqref{eq:aDRCBF_condition} and \eqref{eq:thm_2_proof:4}, one has $$\dot{\tilde{\phi}}_{m-1}(x(t)) + p_m\tilde{\phi}_{m-1}(x(t)) {\wxyr \geq \mathcal{H}},~~t\in[t_2,t_1).$$
%
% Now we prove $\tilde{\phi}_{m-1}(x(t)) > 0$ for all $t \in [t_2,t_1]$.
% %
% Construct the following auxiliary system $$\dot{y}(t) = -p_my(t) + \mathcal{H},~~~y(t_2) = \tilde{\phi}_{m-1}(x(t_2)).$$
% %
% Using Comparison Lemma (Lemma 3.4, \cite{khalil2002nonlinear}), we have
% $$\tilde{\phi}_{m-1}(x(t)) \geq e^{-p_m (t-t_2)} \tilde{\phi}_{m-1}(x(t_2)) + \left(1- e^{-p_m(t-t_2)}\right)\frac{\mathcal{H}}{p_m}$$
% for all $t \in [t_2,t_1^-]$.
%
Consider the auxiliary system $\dot{y}(t) = -p_my(t) + {\wxyr \mathcal{H}}$ {\wxyr with $y(t_2) = \tilde{\phi}_{m-1}(x(t_2))$.}
Using Comparison Lemma (Lemma 3.4, \cite{khalil2002nonlinear}), {\wxyr one has $\tilde{\phi}_{m-1}(x(t))\geq y(t) > 0$ and further $\dot{\tilde{\phi}}_{m-2}(x(t)) + p_{m-1}\tilde{\phi}_{m-2}(x(t)) \geq \mathcal{H},~\forall t\in[t_2,t_1)$ by \eqref{eq:thm_2_proof:4}.}
%
% Similarly, using Comparison Lemma, one has $\tilde{\phi}_{m-2}(x(t))> 0$ for all $t \in [t_2,t_1)$.}
%
By repeating these steps, {\wxyr one can show that $\tilde{\phi}_{l}(x(t))> 0$ and further $\dot{\tilde{\phi}}_{l-1}(x(t)) + p_l\tilde{\phi}_{l-1}(x(t)) \geq \mathcal{H}$ for all $t \in [t_2,t_1)$ according to \eqref{eq:thm_2_proof:4}.
Denote $t_1^-$ as the time instant just before $t_1$.
By continuity, $\tilde{\phi}_{l-1}(x(t_1^-))$ can be arbitrarily small such that $\mathcal{H}> p_l\tilde{\phi}_{l-1}(x(t_1^-))$, which implies $\dot{\tilde{\phi}}_{l-1}(x(t_1^-)) > 0$. However, since $t_1$ is assumed to be the first time instant when $\tilde{\phi}_{l-1}$ reaches zero, it must hold that $\dot{\tilde{\phi}}_{l-1}(x(t_1^-)) \leq 0$, which contradicts the above analysis.}
Since the non-existence of any trajectory approaching $\partial \mathcal{C}_l$ can be shown for arbitrary $l\in\mathbb{N}_{1:m}$, one can conclude that the set $\bar{\mathcal{C}}$ is forward invariant for \eqref{eq:dis_sys}, which further implies $b(x(t))> 0, ~\forall t\geq 0$. 
\end{proof}

\subsection{Optimal Control with {\wxyr rDRCBF}}
By virtue of Theorem \ref{thm:t2}, any Lipschitz continuous controller $u$ satisfying \eqref{eq:aDRCBF_condition} can {\wxyr render the set $\bar{\mathcal{C}}$ forward invariant for system \eqref{eq:dis_sys}.}
Now, we integrate ISS-CLF with {\wxyr rDRCBF} to reformulate Problem \ref{problem:p1} as the following {\wxyr quadratic program}
\begin{align*}
    &\operatorname*{min}_{(u,{\wxyr \tilde\delta}) \in \mathbb{R}^p \times \mathbb{R}} 
   ~~ J(x, u)+ \rho \tilde{\delta}^2~~~~~~~~~~~~~\\
    &~~~~\text{s.t. }~~~ L_fV(x)+L_gV(x)u \leq -\sigma V(x)+\tilde{\delta}\\
    &~~~~~~~~~~~~ \tilde{\pi}_m(x) + \tilde{\beta}_u(x) u +\sum_{j=0}^{m-1} c_j^m \tilde{\psi}_j(x) \geq k_m \Gamma_{m-1}(x),
\end{align*}
where $\tilde{\delta}$ is a slack variable {\wxyr to relax ISS-CLF condition} for guaranteeing {\wxyr the quadratic program to be} feasible.

\begin{remark} \label{rm:para_design_adrcbf}
    (\textbf{Parameter design of {\wxyr rDRCBFs}})
    In {\wxyr rDRCBF}, $k_i$ and $r_i$ are introduced.
    The parameter $k_i$ is used as shown in Remark \ref{rm:para_design_drcbf} to balance the conservativeness brought by over-estimation of $\frac{1}{4k_i}\lVert L_h \tilde{\psi}_{i-1}\rVert^2$ and $k_i \mathcal{D}^2$.
    %
    % It can be obviously observed that a large $k_i$ simultaneously enhances $\frac{1}{4k_i}\lVert L_h \tilde{\psi}_{i-1}\rVert^2$ and suppresses $k_i\lVert D\rVert^2$.
    To reduce conservativeness, $k_i$ can be chosen as $k_i^*$ (See Remark \ref{rm:para_design_drcbf}).
    The parameter $r_i$ is used to further suppress the conservativeness introduced by $\Gamma_{i-1}$.
    {\wxyr By \eqref{eq:B_property}, $B(\tilde{\phi}_{i-1})$} approaches infinity as state trajectories approach $\partial \mathcal{C}_i$, and thereby {\wxyr a smaller $r_i$ makes $\Gamma_{i-1} \geq \mathcal{D}^2$ in a smaller neighborhood of $\partial\mathcal{C}_i$.}
    In other words, $x$ is allowed to get closer to the boundary of the safe set.
\end{remark}

{\wxyr
\begin{remark}
    (\textbf{Robustness of HOCBF, kDRCBF and rDRCBF}) By using extended class $\mathcal{K}_\infty$ functions, HOCBF is well defined in $\mathbb{R}^n$ for system \eqref{eq:dis_sys} if $d$ is sufficiently differentiable, and its robustness is characterized by asymptotic stability of a subset of $\mathscr{C}$, i.e., $\mathscr{X}$.
    As shown in \cite{tan2022high}, this property of $\mathscr{X}$ guarantees that the state $x\in \mathbb{R}^n\setminus\mathscr{X}$ converges asymptotically to $\mathscr{X}$; however, when $d$ is non-differentiable, the higher-order derivatives required by HOCBF are not well defined, and consequently HOCBF may not work in $\mathscr{X}$.
    In contrast, kDRCBF remains well defined in $\mathbb{R}^n$ under general bounded disturbance, and its robustness can be characterized in two aspects: $i$) kDRCBF uses disturbance bound $\mathcal{D}$ to guarantee a subset of $\mathscr{C}$, i.e., $\bar{\mathscr{C}}$ forward invariant under disturbance; $ii$) the set $\bar{\mathscr{C}}$ is asymptotic stable, which can be shown following the similar development as in Proposition 3, \cite{tan2022high}.
    Without using any knowledge of $d$, rDRCBF can also enforce a subset of $\mathscr{C}$, i.e., $\bar{\mathcal{C}}$ forward invariant by using reciprocal-like functions \eqref{eq:B_property} to over-approximate $d$ near the set boundary.
    However, similar to reciprocal CBF which is not defined at the boundary of safe set \cite{ames2017control}, rDRCBF introduces singularities at the boundary of the safe set, and thus it is only well defined within $\bar{\mathcal{C}}$. 
\end{remark}
}

% \begin{remark}
%     (\textbf{Inter-sampling of aDRCBFs}) While we rigorously prove the safety of continuous system under our aDRCBFs, inter-sampling issue, i.e., QP is solved in a discrete-time manner, may cause unsafe behavior of the system during the implementation.
%     %
%     This issue is more severe in aDRCBFs since we allow the state get close enough to the boundary of safe set, where the safety under aDRCBF may be easily violated if intersample time intervals is large and the sampled state $\hat{x}$ cannot make $\Gamma_i(\hat{x})$ larger than $\mathcal{D}^2$.
%     %
%     One feasible solution is to modify the adaptive function as 
%     $$\Gamma_i(x) = r_i B(\tilde{\phi}_i(x) - \varepsilon_i) + \xi_i,$$
%     where $\varepsilon_i$ and $\xi_i$ are two positive parameters introduced to make $\Gamma_i$ grow faster, thereby handling the inter-sampling issue.
% \end{remark}

\section{Simulation Examples}
{\wxyr This section provides applications of the proposed DRCBFs on ACC and unmanned aerial vehicle (UAV) systems.}

\textbf{\wxyr{Example 1:}} Consider the ACC problem in Section \ref{Sec:ACC}.
For \eqref{eq:ACC}, define its performance index as $J = u^\top Hu + Fu$, where $H$ and $F$ are two performance parameters.  
The follower is required to track a desired speed of $v_d = 35\textnormal{m/s}$, i.e., $\lim_{t\rightarrow \infty }v_f(t) = v_d$.
For speed tracking, we consider the ISS-CLF as $V = (v_f -v_d)^2$.
The corresponding ISS-CLF condition is given as $\frac{2}{M}(v_f - v_d)(u - F_r(v_f)) \leq - \sigma (v_f - v_d)^2.$
The simulation parameters are $M = 1650 \textnormal{kg}$, $v_l = 20\textnormal{m/s}$, $f_0 = 0.1 \textnormal{N}, f_1 = 5 \textnormal{N}\cdot\textnormal{s/m}, f_2 = 0.25 \textnormal{N}\cdot\textnormal{s/m}^2$, $D_{min}= 10\textnormal{m}$, $v_f(0) = 13.89 \textnormal{m/s}$ and $D(0) = 100\textnormal{m}$.
Let $k_1 = k_2 =0.1, p_1 = 5, p_2 = 10$ for both DRCBFs, $r_0 = r_1 = 1$ for rDRCBF, and $\sigma = 10$ for ISS-CLF.
% {\wxyr For both DRCBFs,} $k_1 = k_2 =0.1, r_0 = r_1 = 1, p_1 = 5$ and $p_2 = 10$.
%
% These lead to $c_2^2 = 1, c_1^2 = 15, c_0^2 = 50, c_1^1 = 1$ and $c_0^1 =5$.
%
{\wxyr For} optimization, $H = \frac{2}{M^2}$, $F = -\frac{2F_r}{M^2}$ and $\rho = 2$.
In the following, three cases are studied to  illustrate the performance of our proposed {\wxyr kDRCBF} and {\wxyr rDRCBF}. 
%Please be noted that the bound of disturbances is not used in {\wxyr rDRCBF}.

\begin{figure}[htpb]
    \centering
    \begin{subfigure}{0.43\textwidth}
        \centering
        \includegraphics[width=1\textwidth, trim=0 290 0 290, clip]{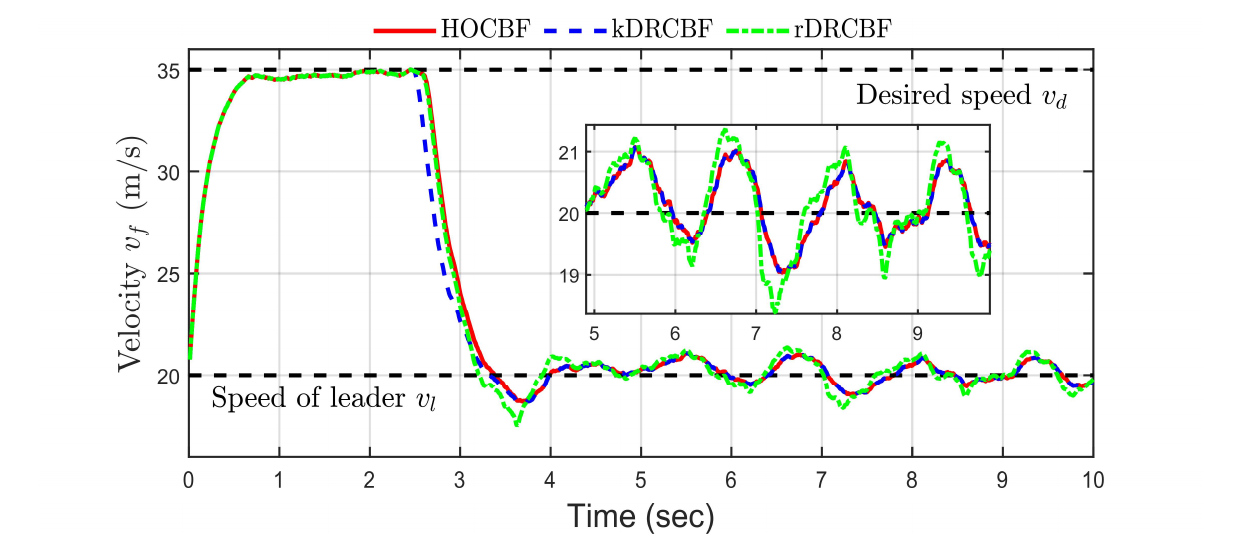}
        \caption{}
        \label{fig:case1_a}
    \end{subfigure}  
    \quad
    \begin{subfigure}{0.43\textwidth}
        \centering
        \includegraphics[width=1\textwidth, trim=0 290 0 290, clip]{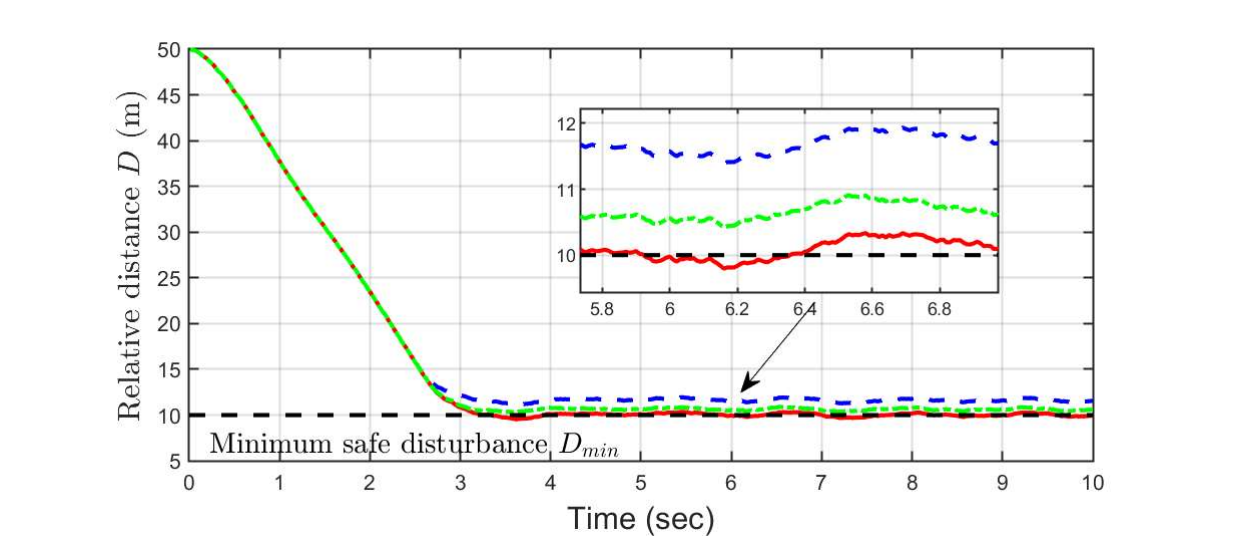}
        \caption{}
        \label{fig:case1_b}
    \end{subfigure}  
    \caption{Profiles of velocity $v_f$ and relative distance $D$ of the ACC system under HOCBF, {\wxyr kDRCBF} and {\wxyr rDRCBF}.}
    \label{fig:case1_combined}
\end{figure}

\noindent{\color{deepblue}\textit{\textbf{Case 1. Safety under non-differentiable disturbances}}}

We compare HOCBF \cite{xiao2021high}, {\wxyr kDRCBF} and {\wxyr rDRCBF} under non-differentiable disturbances.
The disturbances are  $d_u(t) = -4+8\omega_1 +\sin(5t),d_m(t) =  -4 + 8\omega_2 +0.5\cos(10t)$,
% \begin{align*}
%     % d(t) = \begin{pmatrix} -4+\sin(5t) \\ -4+0.5\cos(10t) \end{pmatrix} + 8\text{rand}(2,1),
%     &d_1(t) = -4+8\omega_1 +\sin(5t),\\
%         &d_2(t) =  -4 + 8\omega_2 +0.5\cos(10t).
% \end{align*}
where $\omega_1$ and $\omega_2$ are {\wxyr bounded random signals.}
The results are presented in Fig. \ref{fig:case1_combined}.
Figure \ref{fig:case1_a} shows that under all three CBF approaches, the speed of the following vehicle initially reaches the desired speed, and subsequently slows down {\wxyr when the slack variable works.}
However, as shown by the red line in Fig. \ref{fig:case1_b}, HOCBF fails to maintain the safety distance.
In contrast, both {\wxyr kDRCBF} and {\wxyr rDRCBF} successfully guarantee  $D \geq D_{min}$ for all $t\geq0$. 

\begin{figure}[htpb]
    \centering
    \begin{subfigure}{0.43\textwidth}
        \centering
        \includegraphics[width=1\textwidth, trim=0 290 0 290, clip]{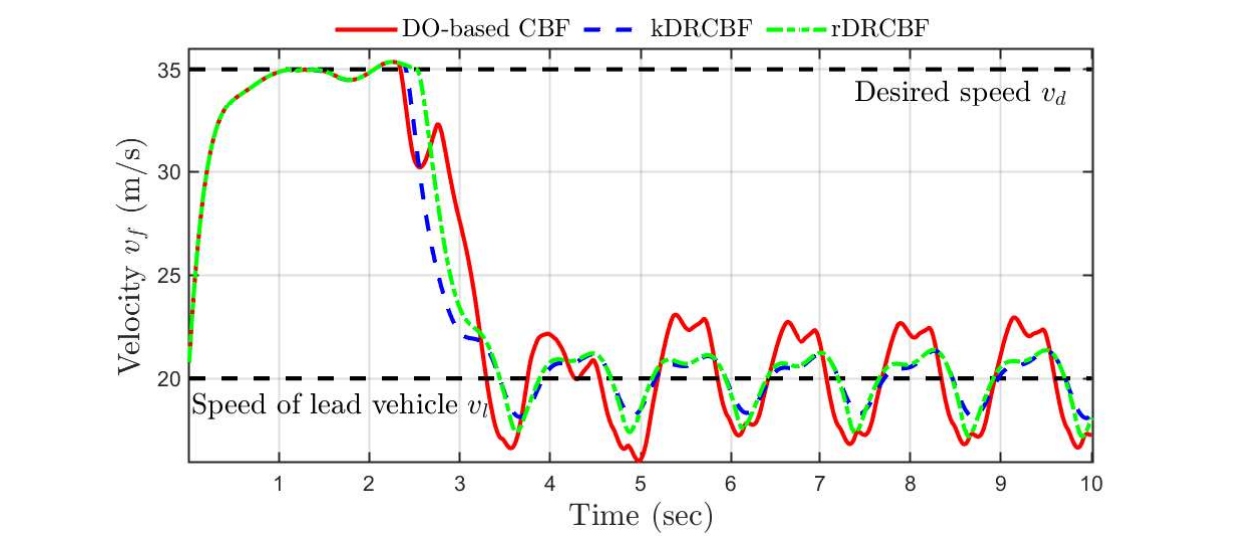}
        \caption{}
        \label{fig:case2_a}
    \end{subfigure}  
    \quad
    \begin{subfigure}{0.43\textwidth}
        \centering
        \includegraphics[width=1\textwidth, trim=0 290 0 290, clip]{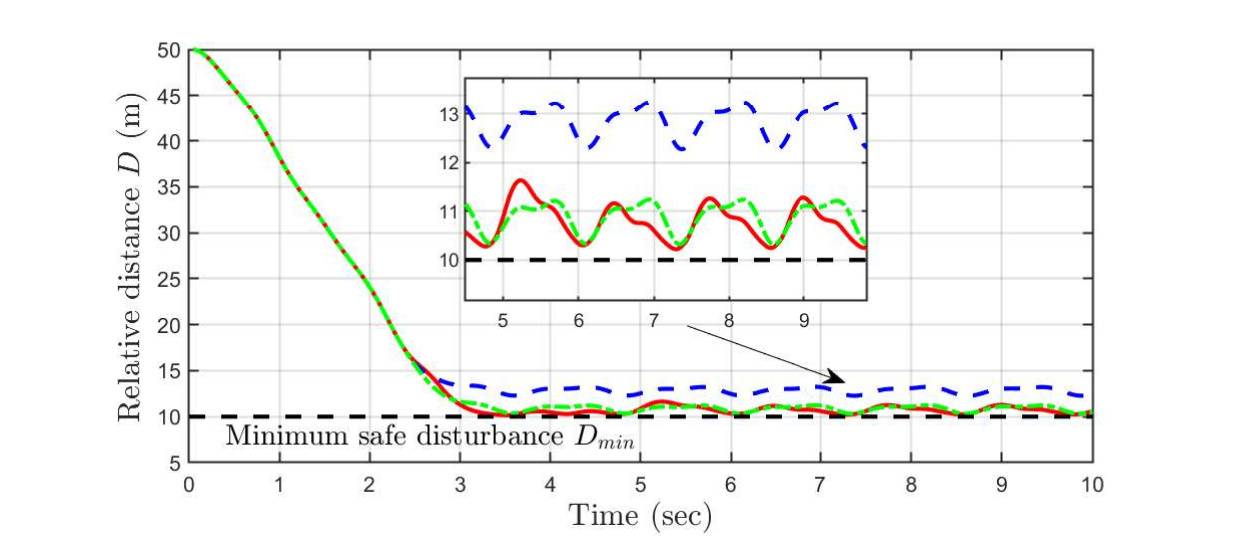}
        \caption{}
        \label{fig:case2_b}
        \end{subfigure}  
    \caption{Profiles of velocity $v_f$ and relative distance $D$ of the ACC system under DO-based CBF, {\wxyr kDRCBF} and {\wxyr rDRCBF}.}
    \label{fig:case2_combined}
\end{figure}

\noindent{\color{deepblue}{\textbf{\textit{Case 2. Safety under differentiable disturbances}}}}
 
This case compares the performance of our proposed {\wxyr kDRCBF} and {\wxyr rDRCBF} approaches with DO-based CBF \cite{wang2025safety}.   
% A bounded continuous disturbance is considered to compare the disturbance rejection performance of the DO-based CBF \cite{wang2025safety}, the proposed DRCBF, and the proposed aDRCBF.
% %
% The DO-based CBF \cite{wang2025safety} represents the state-of-the-art robust CBF for unmatched disturbance rejection.
%
The disturbances are $d_u(t) = 2\sin(5t) + 1.5\cos(10t),~d_m(t)=\sin(10t) + 2\cos(6t)$.
The results are shown in Fig. \ref{fig:case2_combined}.
%
% As seen in Fig. \ref{fig:case2_a}, the velocity trajectories under the DO-based CBF, DRCBF and aDRCBF exhibit only slight differences before $t = 10\textnormal{s}$.
% %
% However, the velocity trajectory under the DO-based CBF shows larger oscillations than those under DRCBF and aDRCBF because DO cannot provide an precise estimation of the high-frequency disturbance.
%
Figure \ref{fig:case2_b} shows that all three CBF approaches can guarantee safety under differentiable disturbances. 
The DO-based CBF achieves safety in a less conservative manner compared to {\wxyr kDRCBF} because our {\wxyr kDRCBF} accounts for the worst-case disturbance, whereas {\wxyr DO-based CBF uses bounds of disturbance derivatives to estimate and compensate disturbances.}
Our {\wxyr rDRCBF} achieves performance comparable to that of the DO-based CBF, without knowing any information of the disturbances.

\begin{figure}[htpb]
    \centering
        \includegraphics[width=0.43\textwidth, trim=0 290 0 300, clip]{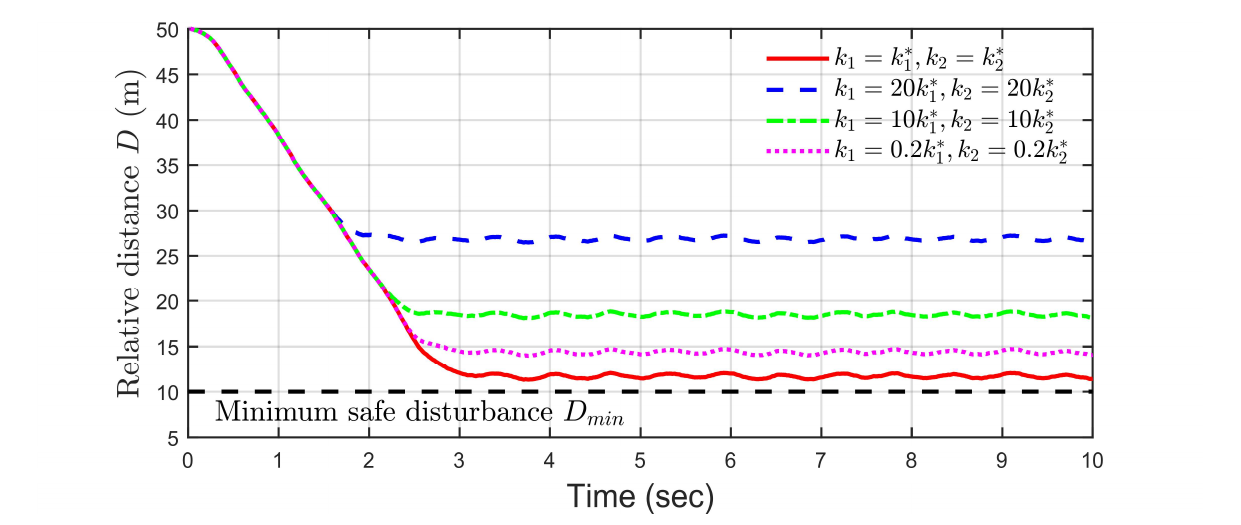}
        \caption{Performance of {\wxyr kDRCBF} with different  $k_1$ and $k_2$}
        \label{fig:case3_a}
    \label{fig:case3_combined}
\end{figure}

\begin{figure}[htpb]
        \centering
        \includegraphics[width=0.43\textwidth, trim=0 250 0 310, clip]{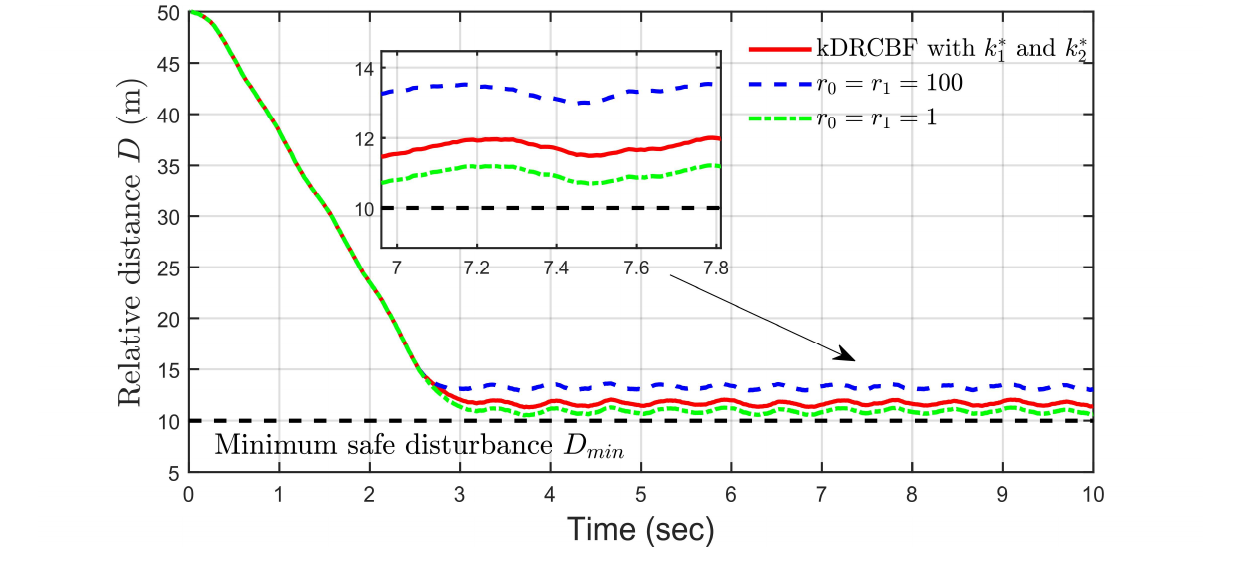}
        \caption{Performance of {\wxyr rDRCBF} with different $r_0$ and $r_1$}
        \label{fig:case3_b}
\end{figure}

% \subsubsection{Case 3: Reducing conservativeness in DRCBF}
\noindent{\color{deepblue}\textit{\textbf{Case 3. Reducing conservativeness}}}
 
This case shows how the parameters $k_i$ and $r_i$ {\wxyr affect} conservativeness in {\wxyr kDRCBF} and {\wxyr rDRCBF}.
%
% {\wxyr Let disturbances satisfy $\lvert d_u\rvert \leq \mathcal{D}_1$ and $\lvert d_m\rvert \leq \mathcal{D}_2$ for two positive $\mathcal{D}_1,\mathcal{D}_2$.}
% \begin{align*}
%         % &d_1(t) = -4 + 8\text{rand}(1,1) +5\sin(2t),\\
%         % &d_2(t) =  -5 + 10\text{rand}(1,1) +4\sin(2t).
%         &d_1(t) = -4 + 8\omega_3 +5\sin(2t),\\
%         &d_2(t) =  -5 + 10\omega_4 +4\sin(2t).
% \end{align*}
%
% where $\omega_3$ and $\omega_4$ are bounded random signals.
%
From Remark \ref{rm:para_design_drcbf}, the optimal parameters for {\wxyr kDRCBF} are $k_1^* = \frac{1}{2\mathcal{D}_1}$ and $k_2^*= \frac{1}{2\mathcal{D}_2}$, {\wxyr where $\mathcal{D}_1$ and $\mathcal{D}_2$ are the bounds of $d_u$ and $d_m$, respectively}.
Simulations are conducted for {\wxyr kDRCBF} when $(k_1,k_2)$ takes different values of $(k_1^*, k_2^*), ~(20k_1^*, 20k_2^*), ~(10k_1^*, 10k_2^*)$ and $(0.2k_1^*, 0.2k_2^*)$.
Figure \ref{fig:case3_a} shows that $k_1^*$ and $k_2^*$ are the optimal choices for reducing conservativeness.
%
% With the optimal parameters $k_1^*$ and $k_2^*$, we let $k_1 = k_1^*$ and $k_2 = k_2^*$.
%
% Next, we run the ACC system using DRCBF with $k_1 = k_1^*,k_2=k_2^*$ and 
For {\wxyr rDRCBF}, we use two sets of parameters $r_0 = r_1=100$ and $r_0 = r_1=1$. 
Figure \ref{fig:case3_b} illustrates that large values of $r_0$ and $r_1$ may make {\wxyr rDRCBF} performance even more conservative than that of {\wxyr kDRCBF} with $k_1^*$ and $k_2^*$.
The conservativeness is reduced and outperforms {\wxyr kDRCBF} with $k_1^*$ and $k_2^*$ as $r_0, r_1$ decrease to $1$.
This observation aligns with the analysis in Remark \ref{rm:para_design_adrcbf}.
In conclusion, optimal parameters $k_1^*$ and $k_2^*$ achieve the least conservativeness for {\wxyr kDRCBF}, and small gains $r_0$ and $r_1$ can further reduce conservativeness for {\wxyr rDRCBF}. 

{\wxyr\textbf{Example 2:} Consider the following UAV system
% $\dot{p}_x = v_x,\dot{p}_y= v_y,\dot{p}_z = v_z,\dot{v}_x =  -0.1\cos(0.08p_x) + u_x + d_x,\dot{v}_y =  0.2\sin(0.05p_y)\cos(0.1p_z) + u_y + d_y,\dot{v}_z =  0.15\cos(0.08p_x)\cos(0.06p_z) + u_z + d_z$,
\begin{align*}
    &\dot{p}_x = v_x,~~\dot{p}_y= v_y,~~\dot{p}_z = v_z \\
    &\dot{v}_x =  -0.1\cos(0.08p_x) + u_x + d_x,\\
    &\dot{v}_y =  0.2\sin(0.05p_y)\cos(0.1p_z) + u_y + d_y,
    \\
    &\dot{v}_z =  0.15\cos(0.08p_x)\cos(0.06p_z) + u_z + d_z,
\end{align*}
where $p_x,p_y,p_z$, $v_x,v_y,v_z$, $u_x,u_y,u_z$ and $d_x,d_y,d_z$ are the position, velocity, input and disturbance of UAV along $X$, $Y$ and $Z$ axis.
The UAV aims to approach a target while avoiding collision with a moving obstacle $\dot{p}_x^o = v_x^o,~\dot{p}_y^o = v_y^o,~\dot{p}_z^o = v_z^o$, where $p_o,~p_o,~p_o$ and $v_x^o,~v_y^o,~v_z^o$ are the position and velocity of obstacle along $X$, $Y$ and $Z$ axis.
The safety constraint is $b = (p_x-p_x^o)^2 + (p_y-p_y^o)^2+(p_z-p_z^o)^2 -  \gamma^2\geq 0$, where $\gamma$ is the safe distance.
\begin{figure}[htpb]
        \centering
        \includegraphics[width=0.43\textwidth, trim=0 290 0 290, clip]{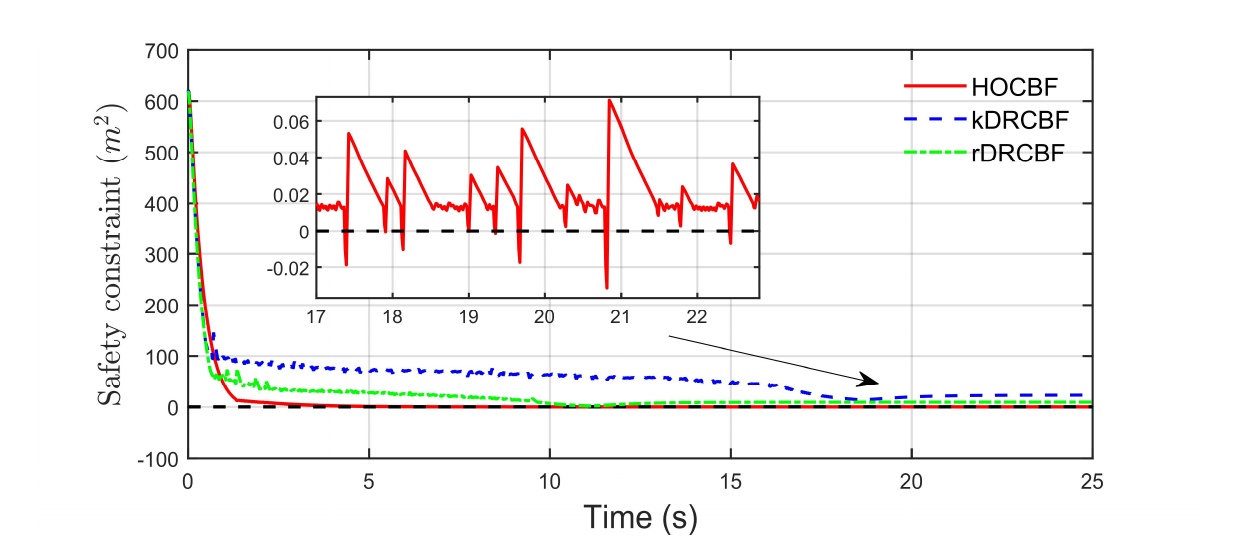}
        \caption{{\wxyr Safety constraint of UAV under HOCBF, kDRCBF and rDRCBF. $b<0$ means that safety constraint is violated.}}
        \label{fig:uav_safety}
\end{figure}
The position of obstacle is known for CBF design, and the velocity of obstacle and disturbances are unknown but bounded.
% $\lvert d_x\rvert, \lvert d_y\rvert, \lvert d_z\rvert,\lvert v_x^o\rvert, \lvert v_y^o\rvert, \lvert v_z^o\rvert$ are bounded (the velocity of obstacle is regarded as an unmatched disturbance).
%
Let the UAV be implemented by HOCBF \cite{xiao2021high}, kDRCBF and rDRCBF, respectively.
As shown in Fig. \ref{fig:uav_safety}, HOCBF fails to achieve collision avoidance, while both kDRCBF and rDRCBF can strictly guarantee safety. }
%

%==================================
\section{Conclusion}
This paper proposed {\wxyr two} disturbance rejection CBFs for guaranteeing safety in the presence of general disturbances.
To reject non-differentiable unmatched disturbances, the disturbance bound is used in {\wxyr kDRCBF} to prevent {\wxyr involving the} derivative of disturbance of any order.
%
% To facilitate its practical application, 
{\wxyr To avoid using any knowledge of disturbances, an rDRCBF was further proposed} by introducing {\wxyr a reciprocal-like} term to overly approximate the disturbances.
{\wxyr Moreover, this reciprocal-like} term provides a choice to adjust the degree of conservativeness.

\section*{References}
% Generated by IEEEtran.bst, version: 1.14 (2015/08/26)

% \vskip 1cm
% %%------------------------------------

% The second paragraph uses the pronoun of the person (he or she) and
% not the author’s last name. It lists military and work experience, including
% summer and fellowship jobs. Job titles are capitalized. The current job must
% have a location; previous positions may be listed without one. Information
% concerning previous publications may be included. Try not to list more than
% three books or published articles. The format for listing publishers of a book
% within the biography is: title of book (publisher name, year) similar to a
% reference. Current and previous research interests end the paragraph.

% The third paragraph begins with the author’s title and last name (e.g.,
% Dr. Smith, Prof. Jones, Mr. Kajor, Ms. Hunter). List any memberships in
% professional societies other than the IEEE. Finally, list any awards and work
% for IEEE committees and publications. If a photograph is provided, it should
% be of good quality, and professional-looking.
% \end{IEEEbiography}

% \begin{IEEEbiographynophoto}{Third C. Author Jr.} (Member, IEEE), photograph and biography not available at the
% time of publication.
% \end{IEEEbiographynophoto}

\end{document}